\def\BibTeX{{\rm B\kern-.05em{\sc i\kern-.025em b}\kern-.08em
		T\kern-.1667em\lower.7ex\hbox{E}\kern-.125emX}}
\theoremstyle{remark}
\newtheorem{remark}{Remark}
\newcolumntype{P}[1]{>{\centering\arraybackslash}p{#1}}
\algnewcommand\algorithmicinput{\textbf{Input:}}
\algnewcommand\Input{\item[\algorithmicinput]}
\algnewcommand\SET[2]{\item\algorithmicset\ #1 \algorithmicto\ #2}
\def\BState{\State\hskip-\ALG@thistlm}
\theoremstyle{plain}
\newtheorem{theorem}{Theorem}
\newtheorem{corollary}{Corollary}
\theoremstyle{definition}
\newtheorem{definition}{Definition}
\definecolor{orange}{RGB}{255,127,0}
\definecolor{gold}{rgb}{0.85,.66,0}
\title{Randomized Kaczmarz Algorithm for Massive MIMO Systems with Channel Estimation and Spatial Correlation}
\author{
	\IEEEauthorblockN{Victor Croisfelt Rodrigues\IEEEauthorrefmark{1}, Jos\'{e} Carlos Marinello Filho\IEEEauthorrefmark{1}, and Taufik Abr\~{a}o\IEEEauthorrefmark{1}}\\
	\IEEEauthorblockA{\IEEEauthorrefmark{1} Department of Electrical Engineering (DEEL), State University of Londrina (UEL), Londrina, Brazil\\
	E-mail: victorcroisfelt@gmail.com, zecarlos.ee@gmail.com, and taufik@uel.br}
}
\begin{document}
\maketitle
\vspace{-10mm}

\begin{abstract}
	To exploit the benefits of massive multiple-input multiple-output (M-MIMO) technology in scenarios where base stations (BSs) need to be {cheap} and equipped with simple hardware, the computational complexity of classical signal processing schemes for spatial multiplexing {of users shall} be reduced. This calls for suboptimal designs that perform well the combining/precoding steps and simultaneously achieve low computational complexities. An approach based on the iterative Kaczmarz algorithm (KA) has been recently investigated, assuring well {execution} without the knowledge of {second order moments of the wireless channels in the BS}, and with {easiness}, since no {tuning parameters,} besides the number of iterations, are required. In fact, the randomized version of KA (rKA) {has been used in this context} due to global convergence properties. Herein, modifications are proposed on {this first} rKA-based attempt, {aiming to improve its} performance--complexity trade-off solution for {M-MIMO systems}. We observe that long-term channel effects degrade the rate of convergence of the rKA-based schemes. This issue is {then tackled herein} by means of a hybrid {rKA} initialization proposal that lands within the region of convexity {of the algorithm} and assures fairness to the {communication system}. {The effectiveness of our proposal is illustrated through numerical results which bring} more realistic system conditions in terms of channel estimation and spatial correlation {than those used so far}. {We also characterize the computational complexity of the proposed rKA scheme, deriving upper bounds for the number of iterations. A case study focused on a dense urban application scenario is used to gather new insights on the feasibility of the proposed scheme to cope with the inserted BS constraints.}
\end{abstract}

\begin{IEEEkeywords}
	Massive MIMO; Combining; Precoding; Kaczmarz algorithm; Computational complexity.
\end{IEEEkeywords}

\section{Introduction}
{Space-division multiple access is the main concept behind the massive multiple-input multiple-output (M-MIMO) technology which allows all concurrent user equipment (UE) to exploit the same time-frequency resources for communication.} In each coherence block, therefore, a base station (BS) equipped with a {large array of antennas needs to estimate the responses of the wireless channels that connect its antennas to the UEs requesting connection}, so that it can spatially differentiate these UEs and execute the signal processing techniques {required for dependable exchange of information}. This knowledge, {often called channel state information (CSI), is canonically acquired} through a pilot training phase, which turns out to be generally feasible by the application of a time-division duplex (TDD) architecture. Although non-linear M-MIMO signal processing techniques are typically credited as optimum implementation strategies, it is well-known that linear schemes achieve an adequate spatial resolution of UEs{, i.e., they have} compelling performance--(computational) complexity results from the point of view of implementation\cite{Marzetta2010}. This is {actually} a direct consequence of the use of massive antenna arrays that, basically, suppresses the interference among UEs \cite{Marzetta2016,Bjornson2017}. In view of the fact that the CSI is proportional to the number of both BS and UEs' antennas adopted in the system, the computation of classical combining/precoding schemes{, unfortunately, still demands} substantially large computational complexities {when considering a BS with limited processing power}. For this reason, {in order to make} BSs cheaper {in the sense of processing hardware}, it is desirable to relax further the computational complexities related to canonical linear schemes, or simply canonical schemes.

Particularly, a good strategy to reduce the computational complexity of signal processing schemes has been the search for suboptimal techniques. These called "relaxed" schemes are envisaged to lessen the complexity of the canonical ones and, simultaneously, achieve similar capacity results to them. The recent literature contains different proposals to obtain the aforementioned goals, such as the {truncated polynomial expansion (TPE) method \cite{Mueller2014}. Despite the good results reported in \cite{Mueller2014}, the advantages of TPE-based schemes} extremely rely on the statistical knowledge of the second order moment of the channel responses, that is, the information of the channel covariance matrices must be known at the receiver {to perform such approach}. {However}, as can be seen from \cite{Bjornson2017a}, {for example}, estimating the second order is undoubtedly challenging and computationally prohibitive to be obtained, even more in scenarios {with hardware limitation}. {This comes to be} especially {true} when {using large antenna arrays}, rendering it unattainable in the context {of low complexity BSs} under consideration, without even need to take into account the severe storage requirements {brought by them}.

In contrast to TPE, the authors of \cite{Boroujerdi2018b} proposed the application of the Kaczmarz algorithm (KA) over the combining/precoding problems of M-MIMO. A remarkable advantage of KA-based methods over TPE or even other schemes lies in the fact that the former does not require the knowledge of covariance matrices to achieve suitable performance--complexity results. The procedure was initially proposed by Kaczmarz in \cite{Kaczmarz1937} as an iterative technique for solving determined and overdetermined (OD) set of linear equations (SLE). With the increasing popularity of stochastic gradient techniques and machine learning, KA has been revived \cite{Needell2014} and applied to other problems, such as solving quadratic equations \cite{Chi2016}. Recently, a randomized version of the KA (rKA) to solve consistent OD SLEs was introduced and analyzed in \cite{Strohmer2006}. This random approach of the KA obtains an expected exponential rate of convergence and is therefore much more reliable, unlike the case of classical KA convergence that depends tremendously on the way that the equations are arranged in {an SLE to be solved}. For M-MIMO, the authors of \cite{Boroujerdi2018b} derived a mathematical framework to examine the performance of rKA applied to emulate some linear signal processing techniques. The rKA-based schemes of \cite{Boroujerdi2018b} consider the resolution of the signal estimation problems, i.e., they are used to estimate in the receiver/transmitter the signals sent/transmitted by/for each UE in each symbol period. {The work, however, did not focus on deriving a notion of convergence of the rKA-based schemes and relating it to amounts of performance and computational complexity.}

{
	In this paper, we reconsider the emulation of some canonical linear schemes by means of rKA, as originally proposed in \cite{Boroujerdi2018b}. We present a different perspective of analysis that aims to collect more insights about the algorithm capabilities and possible improvements, placing as an application scenario a site in which a BS composed by a large number of antennas but equipped with simple and inexpensive hardware is responsible for ensuring communication. Different from \cite{Boroujerdi2018b}, our aim is {\it to obtain the receive combining matrix through rKA and, then, acquire the transmit precoding matrix by applying the uplink-downlink (UL-DL) duality}. This leads to a presentation with a more general point of view with respect to UEs' mobility and prevents the BS from executing a rKA in each symbol period, as done in \cite{Boroujerdi2018b}. The analysis herein considers the emulation of the classical zero-forcing (ZF) and regularized zero-forcing (RZF) solutions for data detection, by seeing them as SLE problems that can be solved via rKA. The choice for these schemes is motivated by their suitable performance--complexity trade-off in classical M-MIMO scenario \cite{Bjornson2017} -- in which the computational complexity is not considered as a constraint at first. In particular, our contributions in this work are threefold:
	\begin{itemize}
		\item We show that large-term fading effects (pathloss and shadowing) are detrimental to the rKA functionality when this approach is applied to obtain the receive combining matrix, as revealed in \cite{CroisfeltRodrigues} by which we broadly introduced these findings in terms of performance only. This motivated us to propose modifications upon the initialization (forcing and hybridizing it) of a particular, called hereafter as parallel rKA-based scheme, improving its convergence, its average performance, and effectively decreasing its computational cost.
		\item Once the authors in \cite{Boroujerdi2018b} only presented the complexity related to rKA-based schemes in terms of big-$\mathcal{O}$ notation, we enhance the \textit{computational complexity analysis} considering complex scalar multiplications/divisions related to the multiplications/divisions operations of matrices and vectors (see the framework reported in \cite[App. B, p. 558]{Bjornson2017}). This enabled us to define upper bounds for the number of rKA iterations considering as basis the computational complexity difference between the proposed and available canonical schemes. We then bridge the gap between performance and computational complexity by characterizing numerically the number of rKA iterations required to our proposed scheme achieves a fraction of the performance acquired by classical schemes, in particular for the RZF filter. Using this, we compare the computational complexities of both approaches in a more particular extent, and identify significant behaviors upon the way that the algorithm converges when considering a case study based on a \textit{non-line of sight (NLoS) dense urban channel scenario}.
		\item More realistic M-MIMO channel conditions are included in our analysis intending to enlarge the insights provided on the convergence operation of rKA-based schemes for a NLoS dense urban scenario. Specifically, we develop a convergence analysis for the focused algorithm based on numerical results adopting the classical least-squares (LS) and minimum mean-squared error (MMSE) channel estimators, and spatially uncorrelated and correlated channels. Channel estimates are considered to demonstrate that imperfect CSI has interesting effects over the rate of convergence of rKA-based schemes that are subtly counter-intuitive. Spatial correlation is considered to embrace more practical conditions of system implementation.
	\end{itemize}	
	Conventionally, the LS channel estimator is often referred to as the most simple computational approach, on the other hand, sacrificing immensely performance, whereas MMSE is considered to be the best performance case, but in turn requires more processing. Spatial correlation is generated using a combination of the exponential correlation model \cite{Loyka2001} with long-term fading variations over the antenna array \cite{Bjornson2018a,Gao2015,Gao2015a}, thus, demonstrating the impact of two different sources of spatial correlation over the convergence of rKA-based schemes: (a) one arising from the proximity of antennas and (b) the other from the unequal contribution of power from the scatterers chaotically distributed along the propagation environment. 
}

This paper is then organized as follows. Section \ref{sec:systemmodel} describes the M-MIMO system model, defining the SE bound and the considered combining/precoding signal estimation solutions as optimization problems, and {the key properties of the KA and its random variant, the rKA.} The focused approach (rKA) is applied as a way to solve the combining/precoding optimization problems in Section \ref{sec:KA:precoding/decoding}. Section \ref{sec:complexity} evaluates the computational complexity of the proposed rKA-based scheme. Numerical results aiming at a better characterization of the rKA-based scheme in a much more realistic M-MIMO context are demonstrated in Section \ref{sec:results}. The main conclusions are summarized in Section \ref{sec:conc}. 

\textit{Notations}: Italic letters denote scalars, whereas boldface uppercase and lowercase represent matrices and column vectors, respectively. The superscript $(\cdot)^{\scriptscriptstyle \mathrm{H}}$ denotes the Hermitian transpose and $\mathcal{N}_{\mathbb{C}}(\cdot,\cdot)$ stands for the circularly symmetric complex Gaussian distribution. $\mathbb{E}\{x\}$ holds for the expected value of a random variable $x$. The $N \times N$ identity matrix is indicated by $\mathbf{I}_{N}$, whereas a vector of $N$ and a matrix of $N\times{N}$ zero elements are represented, respectively, as $\mathbf{0}_N$ and $\mathbf{0}_{N\times{N}}$. $\lVert\cdot\rVert_{2}$ and $\lVert\cdot\rVert_{\mathrm{F}}$ are the $l_{2}$-norm and Frobenius norm, respectively. The inner product between two vectors is denoted as: $\langle\cdot,\cdot\rangle$. The concatenation is represented by $[\cdot,\cdot]$ and the element in row $i$ and column $j$ of $\mathbf{A}$ is {selected} as $[\mathbf{A}]_{i,j}$. In particular, the $i$th element of a vector $\mathbf{a}$ can be seen as $a_{i}$. {Operator $\mathrm{tr}(\cdot)$ stands for the trace of a matrix.}

\section{{Massive MIMO System Model and Kaczmarz-Based Algorithms}}\label{sec:systemmodel}
In this section, a canonical single-cell M-MIMO is presented, aiming to introduce the SE bound utilized to evaluate the performance of signal processing techniques and to state {the traditionally implemented combining/precoding} steps as optimization problems{, which can be viewed as SLEs solved later through the use of the Kaczmarz premises. Basic concepts behind the classical KA and its randomized variant are also given at the end of this section.} The setup considered here is comprised of a BS equipped with $M$ antennas serving $K$ \textit{single-antenna} UEs. {We then make use of block-fading model \cite{Marzetta2016,Bjornson2017}, in which the wireless channels linking UEs and BS antennas are constant during a coherence block of $\tau_c$ symbols, but varying independently from subsequent blocks.} In addition, it is adopted a correlated Rayleigh fading model whereby the channel can be written as $\mathbf{g}_{k}\sim\mathcal{N}_{\mathbb{C}}({\mathbf{0}_{M}},{\mathbf{R}_{k}})$. The covariance matrix $\mathbf{R}_{k}\in\mathbb{C}^{M \times M}$ thus embodies effects such as pathloss and spatial channel correlation, where both effects corresponds to long-term propagation phenomena, while the complex Gaussian distribution stands for the short-term fading \cite{Bjornson2018a}. At this moment, one can assume a pilot training phase to obtain the CSI that follows the features of the TDD scheme \cite{Marzetta2016,Bjornson2017}. The pilot sequences attributed to the UEs are regarded as mutually orthogonal, since it is desired to combat the intra-cell interference. As a consequence of this process, the BS acquires the estimated channel matrix $\hat{\mathbf{G}}\in\mathbb{C}^{M\times K}$ being $\hat{\mathbf{g}}_{k}\in\mathbb{C}^{M}$ its $k$th column, which represents the estimated channel vector of UE $k$. {Note that the CSI perfectly known at the receiver can never be attained in practical wireless systems, i.e., $\hat{\mathbf{G}} = {\mathbf{G}}$, therefore, a more practical assumption is to consider a channel estimation approach being running on the BS side, as the LS or MMSE estimator. We refer to \cite{Marzetta2016,Bjornson2017} for a more detailed description of such procedures.}

\subsection{Uplink Data Transmission}
In the UL, the BS receives the data symbols transmitted by all UEs, yielding the following linearly combined signal: $\mathbf{y} = \sqrt{\rho^{\mathrm{ul}}} \sum_{i = 1}^{K} \mathbf{g}_{i}{s}_{i}+\mathbf{n}$; where $\rho^{\mathrm{ul}}$ is the normalized UL transmit power, ${s}_{i} \sim \mathcal{N}_{\mathbb{C}}({0},{1})$ stands for the data symbol sent by UE $i$, and $\mathbf{n}\in\mathbb{C}^{M} \sim \mathcal{N}_{\mathbb{C}}({\mathbf{0}_{M}},{\mathbf{I}_{M}})$ denotes the receiver noise. The performance of a selected receive combining vector, {$\mathbf{v}_{k}\in\mathbb{C}^{M}$}, will be evaluated using the UL SE metric obtained through the application of the so-called use-and-forget technique\footnote{This denomination stems from the exploitation of the channel estimates only to compute the receive combining vector, whereas the CSI is not used in the detection process.}. The net ergodic and lower bounded UL SE for UE $k$ is then \cite{Marzetta2016,Bjornson2017}
\begin{equation}
\underline{\mathrm{SE}}^{\mathrm{ul}}_{k} = \dfrac{\tau_{\mathrm{ul}}}{\tau_{\mathrm{c}}}\log_{2}\left(1 + \underline{\gamma}^{\mathrm{ul}}_{k} \right) \ \text{[bit/s/Hz]},
\label{eq:sekul}
\end{equation}
with the effective signal-to-interference-plus-noise ratio (SINR) given as
\begin{equation}
\underline{\gamma}^{\mathrm{ul}}_{k} = \dfrac{\rho^{\mathrm{ul}} \lvert{\mathbb{E}\{\mathbf{v}^{\scriptscriptstyle \mathrm{H}}_{k}\mathbf{g}_{k}}\}\rvert^2}{\rho^{\mathrm{ul}} 		\textstyle\sum_{i=1}^{K}\mathbb{E}\{\lvert\mathbf{v}^{\scriptscriptstyle \mathrm{H}}_{k}\mathbf{g}_{i}\rvert\}^2 - \rho^{\mathrm{ul}} \lvert{\mathbb{E}\{\mathbf{v}^{\scriptscriptstyle \mathrm{H}}_{k}\mathbf{g}_{k}}\}\rvert^2 + \mathbb{E}\{\lVert\mathbf{v}_{k}\rVert^2_{2}\}},
\label{eq:ulsinrk}
\end{equation}
where $\tau_{\mathrm{ul}}$ is the fraction of the coherence interval spent in UL. Supposing that only the UL {phase} is occurring, $\tau_{\mathrm{ul}}=\tau_{\mathrm{c}}-\tau_{\mathrm{p}}$ being $\tau_{\mathrm{p}}$ the length of the pilot training phase. The expectations in \eqref{eq:ulsinrk} are taken with respect to all randomness related to the channel estimates and combining schemes.

\subsection{Downlink Data Transmission}\label{sec:sub:down}
The BS sends data to its respective UEs during the DL, where each data bearing has to be precoded with the aim of spatially direct it towards the desired UE. This directional beam control is performed through a transmit precoding vector {given as $\mathbf{w}_{k}\in\mathbb{C}^{M}$ for the $k$th UE.} Thus, the first stage of DL is to steer the information signal to be transmitted by the BS via a precoding process, {here with equally distributed power among the UEs}, stated as $\textstyle\mathbf{x}=\sum_{i=1}^{K}\mathbf{w}_{i}\varsigma_{i}$; where $\varsigma_{i}\sim\mathcal{N}_{\mathbb{C}}({0},{1})$ is the message signal sent by the BS for UE $i${, and} $\lVert\mathbf{w}_{i}\rVert^{2}_{2} = 1$ such that the transmit precoding process does not interfere in the signal power. The precoded signal is then transmitted by the BS {where, recalling that once TDD mode is assumed,} the UL and DL channels are {comprehended as reciprocal} within each $\tau_{\mathrm{c}}$-block. Keeping this in mind, UE $k$ receives: ${y}_{k} = \sqrt{\rho^{\mathrm{dl}}}\mathbf{g}^{\scriptscriptstyle \mathrm{H}}_{k}\mathbf{x}+{n}$; where $\rho^{\mathrm{dl}}$ is the normalized DL transmit power. It can be seen that each UE is affected by all UEs' precoding vectors and, for this reason, the selection of $\mathbf{w}_{i}$ becomes challenging. A heuristic way to resolve the selection of the precoding vectors across the UEs is given through the use of the UL-DL duality\cite{Marzetta2010,Marzetta2016,Bjornson2017}, which will be better discussed and deeply exploited in the sequel. The UE $k$ can estimate its respective message signal by computing the average precoding channel $\mathbb{E}\{\mathbf{g}^{\scriptscriptstyle \mathrm{H}}_{k}\mathbf{x}\}$. This procedure allows the derivation of a hardening bound similar to \eqref{eq:sekul} for the DL SE, given in \cite[p. 317]{Bjornson2017}, with an alike definition of $\tau_{\mathrm{dl}}$ and {which depends greatly on achieving the hardening effect of the channel.} The SE bound of the DL is not directly evaluated here, since the UL evaluation is sufficient to demonstrate the effectiveness of the {proposed contributions} without any loss of generality.

\subsection{Combining/Precoding Schemes: Problem Statement}
Throughout this section, the ZF and RZF combining schemes are presented and then, relying on the UL-DL duality, their respective precoding vectors are obtained. For ease of exposition, the receive combining and transmit precoding matrices are {defined} as $\mathbf{V}\in\mathbb{C}^{M\times{K}}$ and $\mathbf{W}\in\mathbb{C}^{M\times{K}}$ with $\mathbf{v}_{k}$ and $\mathbf{w}_{k}$ being their $k$th columns, respectively. {It is furthermore demonstrated that} the introduced canonical schemes can be seen as the solution of optimization problems. As shall be {discussed}, the optimization problems can be treated as SLEs and thus can be naturally solved through rKA, as conducted in section \ref{sec:KA:precoding/decoding}, giving rise to the rKA-based signal processing schemes. {Note that we are using the word "scheme" in the plural because different ways of interpreting mathematically the rKA can be used to derive different schemes. Our focus, however, is on a particular way, namely the parallel rKA-based RZF scheme, as described in Section \ref{sec:sub:PARL}; the RZF term is used to identify that the underlying SLE being solved by the rKA is based on the classical RZF solution.}

The RZF is a sophisticated approach that takes into account the regularization of the estimated channel matrix by the {inverse of the} signal-to-noise ratio (SNR). This regularization factor can mathematically aid the inversion of $\hat{\mathbf{G}}$ and, physically, it weights the rates of interference suppression and desired signal maximization provided by this scheme. The RZF is given as \cite{Bjornson2017}
\begin{equation}
\mathbf{V}^{\scriptscriptstyle \mathrm{RZF}}= \hat{\mathbf{G}}\left(\hat{\mathbf{G}}^{\scriptscriptstyle \mathrm{H}}\hat{\mathbf{G}}+\xi\mathbf{I}_{K}\right)^{-1}
\label{eq:rzfcombiner}
\end{equation}
where $\xi=\frac{1}{\rho^{\mathrm{ul}}}$ is the regularization factor. This approach is recommended to be used in scenarios where a good channel estimation is achieved, and the inter-cell interference is not duly strong in a multi-cell case.

Notice that if the SNR is high, $\xi\rightarrow0$, and/or a large $M$ is applied, $\textstyle \mathrm{tr}(\hat{\mathbf{G}}^{\scriptscriptstyle \mathrm{H}}\hat{\mathbf{G}})\gg\mathrm{tr}(\xi\mathbf{I}_{K})$, RZF can be approximated as \cite{Marzetta2016,Bjornson2017}
\begin{equation}
\mathbf{V}^{\scriptscriptstyle \mathrm{ZF}}=\hat{\mathbf{G}}\left(\hat{\mathbf{G}}^{\scriptscriptstyle \mathrm{H}}\hat{\mathbf{G}}\right)^{-1}.
\label{eq:zfcombiner}
\end{equation}
which is called as the ZF scheme. The term ZF comes from the fact that $\hat{\mathbf{G}}^{\scriptscriptstyle \mathrm{H}}\mathbf{V}^{\scriptscriptstyle \mathrm{ZF}}=\mathbf{I}_{K}$, {noting} that $\mathbf{V}^{\scriptscriptstyle \mathrm{ZF}}$ is the pseudo-inverse of $\hat{\mathbf{G}}^{\scriptscriptstyle \mathrm{H}}$. Consequently, the ZF aims to eliminate the intra-cell interference while maintains the power level of the desired signals, but it can sometimes increase the noise power when the channel is ill-conditioned. When the receive combining is applied, namely, $(\mathbf{V}^{\scriptscriptstyle \mathrm{ZF}})^{\scriptscriptstyle \mathrm{H}}\mathbf{y}$, however, the real channels are different from the estimated ones, {resulting in} residual interference and, consequently, degraded performance \cite{Bjornson2017}. \textit{It is worth mentioning that the ZF can be comprehended as the RZF case with $\xi=0$}.

In order to separate and estimate the signals of each UE, the receive combining scheme is applied over the received signal in the BS. This can be expressed as
\begin{equation}
\hat{\mathbf{s}}=\mathbf{V}^{\scriptscriptstyle \mathrm{H}}\mathbf{y},
\label{eq:signalest}
\end{equation}
where $\hat{\mathbf{s}}\in\mathbb{C}^{K}$ is the vector with the estimates of the data symbols transmitted by all UEs in one communication instant, and $\mathbf{V}$ can be either the RZF or the ZF filter. The signal estimation problem handled above can be stated as the following optimization problem for the RZF scheme:
\begin{equation}
\arg\min_{\boldsymbol{\varrho}\in\mathbb{C}^{K}} \lVert\hat{\mathbf{G}}\boldsymbol{\varrho}-\mathbf{y}\rVert^{2}_{2} + \xi \lVert{\boldsymbol{\varrho}}\rVert^{2}_{2}.
\label{eq:optz-problem}
\end{equation}
In the sense of minimizing the mean-squared error (MSE), it is well-known that the optimal solution for this problem is the RZF scheme when the regularization factor is the inverse of the SNR (see \cite{Bjornson2017,Boroujerdi2018b} for proof). The same procedure can be done for the ZF when considering $\xi=0$, however, \textit{an analysis focused on the most general case characterized here by the RZF scheme will be assumed hereafter}.

The optimization problem given in \eqref{eq:optz-problem} can be rewritten compactly as $\lVert\mathbf{B}\boldsymbol{\varrho}-\mathbf{y}_{0}\rVert^{2}_{2}$, where $\textstyle \mathbf{B}=[\hat{\mathbf{G}};\sqrt{\xi}\mathbf{I}_{K}]$ is an $(M+K)\times K$ matrix with the estimated channel matrix and the factorized regularization factor, and $\textstyle \mathbf{y}_{0}=[\mathbf{y};\mathbf{0}_{K}]$ is an $(M+K)$-dim vector with the received signal and a zero vector. $\hat{\mathbf{s}}$ can thus be expressed as
\begin{equation}
\hat{\mathbf{s}}=(\mathbf{B}^{\scriptscriptstyle \mathrm{H}}\mathbf{B})^{-1}\mathbf{B}^{\scriptscriptstyle \mathrm{H}}\mathbf{y}_{0}=\left(\hat{\mathbf{G}}^{\scriptscriptstyle \mathrm{H}}\hat{\mathbf{G}}+\xi\mathbf{I}_{K}\right)^{-1}\hat{\mathbf{G}}^{\scriptscriptstyle \mathrm{H}}\mathbf{y}
\label{eq:compact-version-of-s}.
\end{equation}
{These} definitions will be used to apply the rKA as an alternative way to solve the optimization problem seeing it as an SLE.

As anticipated before, the selection of precoding vectors is a very complex problem to be solved, due to its {inter-dependence}, i.e., a precoding vector is affected by all the precoding vectors of the system. A heuristic way to solve it is the UL-DL duality{, a direct consequence of considering TDD mode,} by which the transmit precoding vector of UE $k$ can be computed as \cite{Bjornson2017}
\begin{equation}
\mathbf{w}_{k} = \dfrac{\mathbf{v}_{k}}{\lVert\mathbf{v}_{k}\rVert_{2}}
\label{eq:ul-dl-duality}
\end{equation}
for any receive combining scheme. Recall that the normalization is related to the desired fact that the transmit precoding vector should not interfere with the transmitted power. The above definition allows the BS to obtain the transmit precoding vectors for ZF, RZF, and also KA- or rKA-based schemes without the exact solution of the precoding problem. This result is deeply explored in the next sections.

\subsection{Kaczmarz and Randomized Kaczmarz Algorithms}\label{sec:rKA}
The mathematical concepts behind the KA are now briefly described based on the Kaczmarz's seminal work \cite{Kaczmarz1937}. We also introduce those underlying the rKA which is an extension of KA provided by Strohmer and Vershynin in \cite{Strohmer2006} that {enabled the derivation of an expected notion on the rate of convergence. This means that more reliable results can be get with rKA when compared to those obtained with the classical KA}. We consider, throughout this section, the canonical and generic SLE $\mathbf{A}\mathbf{x}=\mathbf{b}$, where $\mathbf{A}\in\mathbb{C}^{m\times{n}}$ is the matrix of constant coefficients, $\mathbf{x}\in\mathbb{C}^{n}$ is the vector of unknowns, and $\mathbf{b}\in\mathbb{C}^{m}$ is the vector of known offset coefficients. Note that, under the consideration of an underdetermined (UD) SLE ($m<n$), $\mathbf{A}$ is a full row-rank matrix, whereas for an OD SLE ($m>n$), $\mathbf{A}$ is a full column-rank matrix. Besides that, the SLE is deemed consistent, which means that it possesses at least one possible solution.

The KA can be written as \cite{Kaczmarz1937,Strohmer2006}
\begin{equation}
\mathbf{x}^{t+1}=\mathbf{x}^{t}+\dfrac{b_{i}-\langle\mathbf{a}_{i},\mathbf{x}^{t}\rangle}{\lVert{\mathbf{a}_i}\rVert^{2}_{2}}\mathbf{a}_{i},
\label{eq:KA:KA}
\end{equation}
where $t$ is the iteration index and $\mathbf{a}_{i}=([\mathbf{A}]_{i,:})^{\scriptscriptstyle \mathrm{H}}\in\mathbb{C}^{n}$ is the $i$th round-robin selected row of $\mathbf{A}$; similarly, $b_{i}$ is the $i$th element of $\mathbf{b}$. It is important to stay clear that the selected row $i$ is a deterministic variable for the classical KA and can then be computed circularly, for instance, as $i=\mathrm{mod}(t,m)+1$. We describe the KA in detail as follows. First, {the KA} is initialized with an arbitrary solution, $\mathbf{x}^{0}$, seeking to find the genuine answer, $\mathbf{x}$. It then takes a row $i$ in a specific KA iteration $t$ {that} is successively chosen from the order of the equations in $\mathbf{A}$, i.e., sweeping from $\mathbf{a}_{1}$ to $\mathbf{a}_{m}$. After that, KA performs the computation $b_{i}-\langle\mathbf{a}_{i},\mathbf{x}^{t}\rangle$, where notice that if $\langle\mathbf{a}_{i},\mathbf{x}^{t}\rangle=b_{i}$, the solution is not updated, resulting in $\mathbf{x}^{t+1}=\mathbf{x}^{t}$; if not, the KA projects the current approximate solution, $\mathbf{x}^{t}$, onto the hyperplane defined by $\langle\mathbf{a}_{i},\mathbf{x}^{t}\rangle = b_{i}$, thus securing the solution consistency and performing the update $\mathbf{x}^{t+1}$. These successive orthogonal projections tend to lead $\mathbf{x}^{0}$ to $\mathbf{x}$ for a suitable number of KA iterations. In summary, the KA is an efficient manner to find the solution of an SLE, {the method strives} to guide a random guess in the direction of the desired answer while obeying the maximum energy perturbation principle \cite{Boroujerdi2018b} {in relation to $\mathbf{A}$.}

Even though KA is an extremely powerful solver, mainly for OD SLEs, its rate of convergence is not duly known and is hence not satisfactorily explored. This occurs due to the KA's rate of convergence being a difficult metric to be estimated, since there is a strong dependence on the way the equations are arranged in $\mathbf{A}$; whose fact affects the so-called KA's update schedule, namely, the manner the equations are selected as the number of iterations grows. Eventually, some works identified that the random choice of the rows of $\mathbf{A}$ can guarantee a rigorous rate of convergence for the KA. From this observation, \cite{Strohmer2006} proposed the rKA scheme that assures an expected exponential rate of convergence, which is of capital importance to ensure a reliable KA solution in terms of reproducible results.

The rKA randomly and independently chooses the rows of $\mathbf{A}$ in its update schedule, founded on a metric that measures the relevance of the rows. In this case, the random KA-variant model described in \cite{Strohmer2006} can be expressed as
\begin{equation}
\mathbf{x}^{t+1}=\mathbf{x}^{t}+\dfrac{b_{r(t)}-\langle\mathbf{a}_{r(t)},\mathbf{x}^{t}\rangle}{\lVert{\mathbf{a}_{r(t)}}\rVert^{2}_{2}} \mathbf{a}_{r(t)},
\label{eq:KA:rKA}
\end{equation}
where the random picked row in {rKA} iteration $t$ is denoted as $r(t)\in\{1,2,\dots,m\}$. In particular, each $r(t)$ has a sample probability given as ${P_{r(t)}}\in[0,1]$ that embraces a vector with the specified sampling distribution of all rows equals to $\mathbf{p} = {(P_{1},\dots,P_{m})}\in\mathbb{R}^{m}_{+}$; {wherein one should observe that $\sum_{r(t)=1}^{m}{P_{r(t)}}=1$}. {This sample probability is defined to minimize the expected MSE at a given rKA iteration $t$, whose value is given by \cite{Strohmer2006}}
\begin{equation}
P_{r(t)}:=\dfrac{\lVert\mathbf{a}_{r(t)}\rVert^{2}_{2}}{\lVert\mathbf{A}\rVert^{2}_{\mathrm{F}}}.
\label{eq:KA:rowpdf}
\end{equation}
In fact, the above ${P_{r(t)}}$ is suboptimal\footnote{Optimization of the probability may not be useful for M-MIMO, since the CSI changes every $\tau_{\mathrm{c}}$. Indeed, as it shall be demonstrated, $\mathbf{A}$ is directly linked to $\hat{\mathbf{G}}$. Altogether, it will be required to solve optimization problems for each $\tau_{\mathrm{c}}$ wherein these solutions{, given the posed BS restrictions,} require unwanted hardware load \cite{Boroujerdi2018b,Dai2014}.} and {can be interpreted as} an amount that measures the fraction of energy of a given row with respect to the total energy of the system. This probability distribution, although suboptimal, has demonstrated effectiveness and, hence, will be adopted in this work. 
{Note, therefore, that the more relevant a row of $\mathbf{A}$, the greater the chance of this being chosen during the update schedule.} The main convergence result of the rKA established in \cite{Boroujerdi2018b} and \cite{Strohmer2006} is summarized by the following corollary.
\begin{corollary}[Expected Rate of Convergence of rKA] \label{col:rka-rate-of-convegence}
	The expected rate of convergence of the rKA is
	\begin{equation}
	\mathbb{E}\{\lVert{\mathbf{x}^{t}-\mathbf{x}^{\star}}\rVert^{2}_{2}\}\leq(1-\kappa_{\mathcal{X}}(\mathbf{A}))^{t}\lVert{\mathbf{x}^{0}-\mathbf{x}^{\star}}\rVert^{2}_{2},
	\label{eq:KA:rateofconvergence}
	\end{equation}
	with $\mathbf{x}^{\star}$ being the optimal solution and $\mathbf{x}^{0}$ an arbitrary initial guess. Moreover, $\kappa_{\mathcal{X}}(\mathbf{A})$ is nominated as the average gain and defined as
	\begin{equation}
	\kappa_{\mathcal{X}}(\mathbf{A}):=\min_{\boldsymbol{\vartheta}\in\mathcal{X},\boldsymbol{\vartheta}\neq0}\dfrac{\lVert\mathbf{A}\boldsymbol{\vartheta}\rVert^{2}_{2}}{\lVert\mathbf{A}\rVert^{2}_{\mathrm{F}}\lVert\boldsymbol{\vartheta}\rVert^{2}_{2}},
	\label{eq:KA:kappa}
	\end{equation}
	{where $\mathcal{X}\subset\mathbb{C}^{M+K}$ is the subspace generated by the columns of $\mathbf{A}$. The important observations here is} that the above metric is totally independent of the sample probability and is proportional to the condition number\footnote{Remember that $\lVert\mathbf{A}$$\rVert^{2}_{2}\lVert\mathbf{A}^{-1}\rVert^{2}_{\mathrm{F}}$ is usually defined as the condition number of $\mathbf{A}$.} of $\mathbf{A}$.
\end{corollary}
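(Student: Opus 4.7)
The plan is to leverage the orthogonal projection interpretation of each rKA update, apply the Pythagorean identity to extract a clean per-iteration decrease, and then average that decrease against the sampling distribution in~\eqref{eq:KA:rowpdf} so that the quotient defining $\kappa_{\mathcal{X}}(\mathbf{A})$ emerges naturally. First I would observe that because the SLE is consistent and $b_{r(t)}=\langle\mathbf{a}_{r(t)},\mathbf{x}^{\star}\rangle$, the true solution $\mathbf{x}^{\star}$ lies in every hyperplane $H_{r(t)}=\{\mathbf{z}:\langle\mathbf{a}_{r(t)},\mathbf{z}\rangle=b_{r(t)}\}$ onto which~\eqref{eq:KA:rKA} projects $\mathbf{x}^{t}$. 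Because $\mathbf{x}^{t+1}-\mathbf{x}^{t}$ is aligned with $\mathbf{a}_{r(t)}$ while $\mathbf{x}^{t+1}-\mathbf{x}^{\star}$ is orthogonal to $\mathbf{a}_{r(t)}$, the Pythagorean theorem yields
\begin{equation}
\lVert\mathbf{x}^{t+1}-\mathbf{x}^{\star}\rVert^{2}_{2} = \lVert\mathbf{x}^{t}-\mathbf{x}^{\star}\rVert^{2}_{2} - \lVert\mathbf{x}^{t+1}-\mathbf{x}^{t}\rVert^{2}_{2}.
\end{equation}

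Next I would plug in the closed-form step size from~\eqref{eq:KA:rKA}, which gives $\lVert\mathbf{x}^{t+1}-\mathbf{x}^{t}\rVert^{2}_{2}=|\langle\mathbf{a}_{r(t)},\mathbf{x}^{\star}-\mathbf{x}^{t}\rangle|^{2}/\lVert\mathbf{a}_{r(t)}\rVert^{2}_{2}$, and take the conditional expectation over $r(t)$ given $\mathbf{x}^{t}$. The sampling weight $P_{r(t)}=\lVert\mathbf{a}_{r(t)}\rVert^{2}_{2}/\lVert\mathbf{A}\rVert^{2}_{\mathrm{F}}$ in~\eqref{eq:KA:rowpdf} is designed precisely so that the $\lVert\mathbf{a}_{r(t)}\rVert^{2}_{2}$ factor cancels, collapsing the conditional expectation of the step into $\lVert\mathbf{A}(\mathbf{x}^{\star}-\mathbf{x}^{t})\rVert^{2}_{2}/\lVert\mathbf{A}\rVert^{2}_{\mathrm{F}}$. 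Invoking~\eqref{eq:KA:kappa} with the choice $\boldsymbol{\vartheta}=\mathbf{x}^{\star}-\mathbf{x}^{t}$ lower-bounds this ratio by $\kappa_{\mathcal{X}}(\mathbf{A})\lVert\mathbf{x}^{\star}-\mathbf{x}^{t}\rVert^{2}_{2}$, which, substituted back into the Pythagorean identity, produces
\begin{equation}
\mathbb{E}\{\lVert\mathbf{x}^{t+1}-\mathbf{x}^{\star}\rVert^{2}_{2}\,|\,\mathbf{x}^{t}\} \leq (1-\kappa_{\mathcal{X}}(\mathbf{A}))\,\lVert\mathbf{x}^{t}-\mathbf{x}^{\star}\rVert^{2}_{2}.
\end{equation}
Taking total expectations and unrolling the recursion over $t$ iterations delivers~\eqref{eq:KA:rateofconvergence}.

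The main obstacle I foresee is justifying the admissibility of $\boldsymbol{\vartheta}=\mathbf{x}^{\star}-\mathbf{x}^{t}$ within the minimization~\eqref{eq:KA:kappa}, i.e., verifying that the running error vector stays inside the subspace $\mathcal{X}$ over which the average-gain quotient is defined and provides a strictly positive lower bound. For the OD full column-rank matrices targeted here this is essentially automatic, since $\mathbf{x}^{\star}$ is the unique solution and $\mathbf{A}^{\mathrm{H}}\mathbf{A}\succ\mathbf{0}$ forces $\kappa_{\mathcal{X}}(\mathbf{A})>0$ over the entire ambient space. In more general settings one would need to exploit the fact that every increment in~\eqref{eq:KA:rKA} is parallel to some row of $\mathbf{A}$, so the component of $\mathbf{x}^{t}-\mathbf{x}^{\star}$ orthogonal to the row space is frozen across iterations and can be absorbed into an appropriate choice of initialization $\mathbf{x}^{0}$. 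Apart from this subspace-invariance bookkeeping, no delicate probabilistic machinery is required: the argument rests on elementary orthogonal-projection geometry plus a single application of the tower property of expectation.
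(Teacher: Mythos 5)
Your proposal is correct: it is the standard Strohmer--Vershynin argument (orthogonal projection onto the hyperplane containing $\mathbf{x}^{\star}$, Pythagoras, averaging the squared step against $P_{r(t)}=\lVert\mathbf{a}_{r(t)}\rVert^{2}_{2}/\lVert\mathbf{A}\rVert^{2}_{\mathrm{F}}$ so the row norms cancel, then bounding by $\kappa_{\mathcal{X}}(\mathbf{A})$ and unrolling via the tower property), and you correctly flag the one real subtlety, namely that the error vector must remain in the subspace $\mathcal{X}$ over which the minimum is taken. The paper itself gives no proof beyond a citation to \cite{Boroujerdi2018b}, and your reconstruction is essentially the same argument used there, so there is nothing further to compare.
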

\begin{proof}
	The proof can be found in \cite[p. 11]{Boroujerdi2018b}.
\end{proof}

To gain further insights, the above corollary {defines} in \eqref{eq:KA:rateofconvergence} the quantity of rKA iterations (projections) required to take the initial solution $\mathbf{x}^{0}$ closer to $\mathbf{x}^{\star}$, based on an average gain provided by the rKA. This convergence is exponentially reaching the average inaccuracy limit that is given by the left-hand side of \eqref{eq:KA:rateofconvergence}. {Note that the higher the $(1-\kappa_{\mathcal{X}}(\mathbf{A}))$}, more rKA iterations are needed to lower the initial error $\lVert\mathbf{x}^{0}-\mathbf{x}^{\star}\rVert^{2}_{2}$ towards the desired boundary. It is thus desirable to have a high value of $\kappa_{\mathcal{X}}(\mathbf{A})$ so as to shrink the number of rKA iterations necessary to achieve, suitably, the expected error limit. This means that there is an irrefutable connection between the $\kappa_{\mathcal{X}}(\mathbf{A})$ and the convergence of the rKA. As a result, the best-case of convergence is represented by the maximum value that the average gain can assume, while its minimum stands for the worst-case scenario of convergence. It is also worth mentioning explicitly that the number of rKA iterations is inversely proportional to the average gain. Because of the above convergence result, which brings additional reliability, our focus, from now on, is the use of rKA in the design of signal processing schemes for M-MIMO based on the Kaczmarz methodology\footnote{One must stay clear that procedures to obtaining the combining and precoding matrices via the rKA can also be implemented with the KA, however, the converge will immensely rely on the way that the equations are ordered in the SLE to be solved.}.

\section{Obtaining the Combining/Precoding Matrices via Randomized Kaczmarz Algorithm}\label{sec:KA:precoding/decoding}
The rKA is now presented as a valid approach to solve the {optimization problem of the UL signal estimation established in \eqref{eq:optz-problem}}, which can be interpreted as if the rKA was emulating the solution offered by the RZF scheme. The precoding matrix can then be derived using the UL-DL duality, described in \eqref{eq:ul-dl-duality}, without loss of generality. An alternative way to solve the optimization problem considered in \eqref{eq:optz-problem} is its statement through the resolution of an OD SLE given as $\mathbf{B}\boldsymbol{\varrho}=\mathbf{y}_{0}$. Notice that this SLE is OD, since the number of equations, $M+K$, is greater than the number of unknowns, $K$, in a typical M-MIMO scenario {\cite{Marzetta2010}}. Moreover, observe that this problem is not consistent because of the arbitrariness inserted by the receiver noise in the signal received by the BS, $\mathbf{y}$. This makes that the OD SLE possesses an infinite number of solutions. If the rKA is applied to solve this problem, a convergence bias or residual error is eventually introduced in the solution due to the inconsistency. In \cite{Boroujerdi2018b}, this problem was circumvented by splitting $\mathbf{B}\boldsymbol{\varrho}=\mathbf{y}_{0}$ with $\boldsymbol{\varrho}^{\star}=\hat{\mathbf{s}}$ into two stages. The first strives to remodel the OD SLE to a consistent form, leading to an UD SLE; then, another OD SLE is derived based on the genuine stated problem ($\mathbf{B}\boldsymbol{\varrho}=\mathbf{y}_{0}$). We present this in depth below.

\subsection{The Two Steps in Solving the SLE}
The first goal is to obtain an estimate of $\mathbf{y}_{0}$, {the term incorporating} the inconsistency. To this end, $\hat{\mathbf{y}}_{0} \in \mathbb{C}^{(M+K)}$ is {first} defined as
\begin{equation}
\hat{\mathbf{y}}_{0}:=\mathbf{B}\hat{\mathbf{s}}\overset{(a)}{=}\mathbf{B}(\mathbf{B}^{\scriptscriptstyle \mathrm{H}}\mathbf{B})^{-1}\hat{\mathbf{G}}^{\scriptscriptstyle \mathrm{H}}\mathbf{y},
\label{eq:esty0}
\end{equation}
where in ($a$) the relation exposed in \eqref{eq:compact-version-of-s} was applied. By making a parallel with the original SLE, $\mathbf{B}^{\scriptscriptstyle \mathrm{H}}\mathbf{y}_{0}=\boldsymbol{\varrho}$, the $\hat{\mathbf{y}}_{0}$ can be estimated via the following SLE:
\begin{equation}
\mathbf{B}^{\scriptscriptstyle \mathrm{H}}\hat{\mathbf{y}}_{0}=(\mathbf{B}^{\scriptscriptstyle \mathrm{H}}\mathbf{B})(\mathbf{B}^{\scriptscriptstyle \mathrm{H}}\mathbf{B})^{-1}\hat{\mathbf{G}}^{\scriptscriptstyle \mathrm{H}}\mathbf{y} = \hat{\mathbf{G}}^{\scriptscriptstyle \mathrm{H}}\mathbf{y}.
\label{eq:ud-sle}
\end{equation}
The above SLE is UD inasmuch as the number of equations, $K$, is smaller than the unknowns amount, $M+K$. Even more, it is consistent {since} all the solutions are within the subspace generated by the columns of $\mathbf{B}^{\scriptscriptstyle \mathrm{H}}$. Regardless of not being optimum, $\hat{\mathbf{y}}_{0}$ can be estimated through rKA. That being the case, it is natural to derive a second SLE given as $\mathbf{B}\hat{\mathbf{s}}=\hat{\mathbf{y}}_{0}$, wherein both OD and consistency conditions are comprised. The solution of this second {SLE based on rKA}, however, is not strictly necessary, seeing that the estimate of $\hat{\mathbf{s}}$ can be acquired through the $K$ last rows of $\hat{\mathbf{y}}_{0}$. This means that solely part of the UD SLE has to be solved, which can be easily seen when $\hat{\mathbf{y}}_{0}$ is replaced in $\mathbf{B}\hat{\mathbf{s}}=\hat{\mathbf{y}}_{0}$ as its definition in \eqref{eq:esty0}; doing that we get
\begin{equation}
\begin{split}
\mathbf{B}\hat{\mathbf{s}}&=\hat{\mathbf{y}}_{0}=\mathbf{B}(\mathbf{B}^{\scriptscriptstyle \mathrm{H}}\mathbf{B})^{-1}\hat{\mathbf{G}}^{\scriptscriptstyle \mathrm{H}}\mathbf{y} \\
\hat{\mathbf{s}}&=\mathbf{B}^{\scriptscriptstyle \mathrm{H}}\mathbf{B}(\mathbf{B}^{\scriptscriptstyle \mathrm{H}}\mathbf{B})^{-1}\hat{\mathbf{G}}^{\scriptscriptstyle \mathrm{H}}\mathbf{y}\overset{(a)}{=}\mathbf{B}^{\scriptscriptstyle \mathrm{H}}\hat{\mathbf{y}}_{0},
\end{split}
\label{eq:inception}
\end{equation}
where in ($a$) it was used the equality in \eqref{eq:ud-sle}.

\subsection{Parallel rKA-Based RZF Scheme}\label{sec:sub:PARL}
The receive combining matrix can be obtained through the implementation of $K$ rKAs in parallel, in which the $k$th one is solving an SLE of the form: $\mathbf{B}^{\scriptscriptstyle \mathrm{H}}\mathbf{c}=\mathbf{e}_{k}$. By already adopting the Kaczmarz notation, $\mathbf{c}^{t}=[\mathbf{u}^{t},\mathbf{z}^{t}]\in\mathbb{C}^{(M+K)}$ with $\mathbf{u}^{t}\in\mathbb{C}^{M}$ and $\mathbf{z}^{t}\in\mathbb{C}^{K}$. Further, $\mathbf{e}_{k}\in\mathbb{C}^{K}$ is the canonical basis vector with $[\mathbf{e}_{k}]_{k} = 1$ and $0$ otherwise. This analogous problem is essentially founded in the UD SLE exposed in \eqref{eq:ud-sle} and the property given in \eqref{eq:inception}, being its parallel resolution properly described in Algorithm \ref{algthm:parlrka}. It is remarkable to see that, in the parallelized form, $\mathbf{e}_{k}$ is replacing $\hat{\mathbf{G}}^{\scriptscriptstyle \mathrm{H}}\mathbf{y}$ from its underlying SLE in \eqref{eq:ud-sle}, whose replacement activates only the resolution for the equations related to the $k$th rKA. Elaborating further, this aspect suggests {that each iteration of the outer loop (\textit{step 4} to \textit{step 18}) in Algorithm \ref{algthm:parlrka} with respect to $k$ is solving an SLE} to obtain the receive combining vector associated with the UE $k$. {Observe that} the $k$th rKA solution does not actually yields $\mathbf{v}_{k}$, but it provides the vector $\mathbf{d}_{k}\in\mathbb{C}^{K}$ that is given solely by the last $K$ rows of $\mathbf{c}^{t}$, i.e., $\mathbf{z}^{t}$; thereby, $\mathbf{v}_{k}=\hat{\mathbf{G}}\mathbf{d}_{k}$ (or $\mathbf{v}^{\scriptscriptstyle \mathrm{rKA}}_{k}$ in order to emphasize that the combining vector is estimated via rKA).

Notably, the execution of each rKA {, i.e., each $k$th iteration of the outer loop defined from \textit{step 4} to \textit{step 18}} can be envisioned to run in different hardware units {working concurrently}, where each unit is independently or quasi-independently reaching its own receive combining vector. This independence is very related to the way that the randomness of the update schedule in \eqref{eq:KA:rKA} is interpreted at a given context, i.e., how the random selection of rows are associated across the hardware units {running in parallel}. With this in mind, two possibilities are plausible to give satisfactory results: (a) all the $K$ rKAs can share the random rows, $r(t)$s, and (b) each rKA has its self-realized arbitrariness. This discussion is better deepened in Section \ref{sec:complexity}, which treats about computational complexity; however, realize {at this moment} that both approaches provide suitable convergence{, such as that} established in Corollary 1.

\begin{algorithm}[htp]
	\centering
	\caption{Parallel (PARL) approach of rKA to estimate the RZF receive combining matrix}
	\label{algthm:parlrka}
	\begin{algorithmic}[1]
		\State \textbf{Input:} Estimated channel matrix $\hat{\mathbf{G}}\in\mathbb{C}^{M\times{K}}$, inverse level of SNR $\xi$, number of UEs $K$, and number of rKA iterations $T_{\scriptscriptstyle\mathrm{rKA}}$.
		\State \textbf{Initialization:} Specify $\mathbf{D}^{\scriptscriptstyle \mathrm{rKA}}\in\mathbb{C}^{K\times{K}}= \mathbf{0}_{K\times{K}}$ as the factorized version of $\hat{\mathbf{G}}^{\scriptscriptstyle \mathrm{H}}\mathbf{V}^{\scriptscriptstyle \mathrm{rKA}}$.
		\State \textbf{Procedure:}
		\For{$k \leftarrow 1$ {\bf to} $K$} 
		\State Initialize the state vectors $\mathbf{u}^{t}\in\mathbb{C}^{M}$ and $\mathbf{z}^{t}\in\mathbb{C}^{K}$ with $\mathbf{u}^{0}=\mathbf{0}_{M}$ and $\mathbf{z}^{0}=\mathbf{0}_{K}$.
		\State Compute the canonical basis, $\mathbf{e}_{k}\in\mathbb{C}^{K}$, where $[\mathbf{e}_{k}]_{k}=1$ and $[\mathbf{e}_{k}]_{j}=0$, $\forall{j}\neq{k}$.
		\For{$t \leftarrow 0$ {\bf to} {$T_{\scriptscriptstyle\mathrm{rKA}}-1$}}
		\If {$t = 0$}
		\State Pick the $k$th row of $\hat{\mathbf{G}}^{\scriptscriptstyle \mathrm{H}}$, $\hat{\mathbf{g}}^{\scriptscriptstyle \mathrm{H}}_{r(t)}\in \mathbb{C}^{1\times M}$ for $r(t)=k$.
		\Else
		\State Pick the $r(t)$th row of $\hat{\mathbf{G}}^{\scriptscriptstyle \mathrm{H}}$, $\hat{\mathbf{g}}^{\scriptscriptstyle \mathrm{H}}_{r(t)}\in \mathbb{C}^{1\times M}$, with $r(t) \in \{1,2,\dots,K\}$ drawn based on $\mathbf{p}$ where $P_{r(t)}= \frac{\lVert\hat{\mathbf{g}}_{r(t)}\rVert^{2}_{2} + \xi}{\lVert\hat{\mathbf{G}}\rVert^{2}_{\mathrm{F}} + K\xi}$. 
		\EndIf
		\State Compute the residual $\eta^{t}:=\frac{[\mathbf{e}_{k}]_{r(t)}-\langle\hat{\mathbf{g}}_{r(t)},\mathbf{u}^{t}\rangle - \xi z^{t}_{r(t)}}{\lVert\hat{\mathbf{g}}_{r(t)}\rVert^{2}_{2} + \xi}$.
		\State Update $\mathbf{u}^{t+1} = \mathbf{u}^{t} + \eta^{t}\hat{\mathbf{g}}_{r(t)}$.
		\State Update $z^{t+1}_{r(t)} = z^{t}_{r(t)} + \eta^{t}$ and repeat the other positions: $z^{t+1}_{j} = z^{t}_{j}, \ \forall j \neq r(t)$.
		\EndFor 
		\State Update $\left[\mathbf{D}^{\scriptscriptstyle\mathrm{rKA}}\right]_{:,k}$ = $\mathbf{z}^{T_{\scriptscriptstyle\mathrm{KA}}-1}$.
		\EndFor
		\State \textbf{Output:} $\mathbf{V}^{\scriptscriptstyle\mathrm{rKA}}=\hat{\mathbf{G}}\mathbf{D}^{\scriptscriptstyle\mathrm{rKA}}$.	
	\end{algorithmic}
\end{algorithm}

In spite of the fact that Algorithm \ref{algthm:parlrka} was assessed indirectly\footnote{{Recall that the authors of \cite{Boroujerdi2018b} discuss the rKA as an approach to obtain directly each element of $\hat{\mathbf{s}}$, motivated by slow mobility scenarios. Herein, we analyze a most generic case, where obtaining individually the combining/precoding matrices is of paramount importance, {being this knowledge used until the end of a coherence interval to estimate (indirectly) the elements of $\hat{\mathbf{s}}$.}}} in \cite{Boroujerdi2018b}, some changes are included here. To put them in context, it is reasonable to infer that the {sample probability $P_{r(t)}$} is generally impacted by the pathloss and shadowing due to {their} relation with the estimated channel powers (see the computation of $P_{r(t)}$ in \textit{step 11} of Algorithm \ref{algthm:parlrka}). Note that if $\mathbf{p}$ is detrimentally affected, the rKA's update schedule is also degraded. At this moment, one should note that if row $k$ is not selected in the $k$th {(standard)} rKA{, that is, if \textit{step 9} did not exist and $r(t)$ never equals $k$}, the solution update would not even occur, making the desired convergence unattainable. This also happens because of the use of zero vectors to initialize the state variables in $\mathbf{c}^{0}$. The zero vectors, however, are desirable, since other arbitrary values, like a vector of ones, can cause a power bias over the rKA-estimated receive combining vectors. Mathematically, this has also much to do with the fact that the initialization vector must be chosen in such a way that it belongs to the column space of the coefficient matrix {($\mathbf{B^{\scriptscriptstyle\mathrm{H}}}$ in this case)}; which property is not being satisfied by a vector of ones{, for example}. In summary, the UEs close to the BS (center-UEs) have significant higher values of $P_{r(t)}$ than those for UEs located at the edge of the cell (edge-UEs). This disparity leads to a poor or impossible convergence of the combining vectors {related to edge-UEs in Algorithm \ref{algthm:parlrka} when ignoring by the time the existence of \textit{step 9}}. Section \ref{sec:nume:subsec:rowprob} gives numerical insights on the consequences {of these} long-term {channel} effects.

Striving to overcome this negative effect, we observed that a suitable and non-random manner to initialize the $k$th {rKA} and thus ensure its update in accordance with the column space of the coefficient matrix is to force the selection of the $k$th row in the first run; as performed in \textit{step 9} of Algorithm \ref{algthm:parlrka}. This proceeding {can also provide} a better average initial guess, which translates into a better average gain {that improves the {expected} rate of convergence presented in Corollary 1, as shall be evidenced in the sequel. What makes the proposed procedure an attractive way to initialize Algorithm 1 is the fact that edge-UEs can now fairly obtain their receive combining vectors, even with the tough preference given to the center-UEs in the computation of $P_{r(t)}$. {This gives for} the parallel (PARL) rKA-based RZF scheme {more robustness under} practical regimes and, consequently, {better performance due to improved} convergence towards the aimed solutions.} One can additionally conjecture that {the proposed} initialization approach, which can be seen as a \textit{hybrid algorithm} between KA and rKA, is not strictly necessary to UEs that have high values of ${P_{r(t)}}$, for the simple reason that they are more probable to occur when drawing $r(t)$s. As a result, an optimization problem to find which one of the $K$ {rKAs} has to be force-initialized could be suggested. {Note, however,} that the gain provided by this possible optimization is merely one more iteration for each {rKA}, and its finding may produce {a considerable high and undesired number of additional computations in the BS}. This unattractive optimization procedure was therefore not considered in this work. 

The convergence performance of Algorithm \ref{algthm:parlrka} is defined in Theorem \ref{thrm:RoC-PARL}. {For convenience, we also restated} some comments about the average gain bounds based on \cite{Boroujerdi2018b} in Remarks \ref{rmk:parl1} and \ref{rmk:parl2}. Recall from Section \ref{sec:rKA} that the limits of $\kappa_{\mathcal{X}}(\mathbf{B}^{\scriptscriptstyle \mathrm{H}})$ are inversely proportional to the respective bounds of the number of rKA iterations. These remarks will hence be important for the computational evaluation of rKA-based RZF schemes.
\begin{theorem}[PARL rKA-Based RZF Scheme: Rate of Convergence] \label{thrm:RoC-PARL}
	Consider the $k$th UD SLE $\mathbf{B}^{\scriptscriptstyle \mathrm{H}}\mathbf{c}=\mathbf{e}_{k}$, where $\mathbf{B}^{\scriptscriptstyle \mathrm{H}}\in\mathbb{C}^{K\times(M+K)}$ is the matrix of constant coefficients, $\mathbf{c}\in\mathbb{C}^{(M+K)}$ is the vector of unknowns, and $\mathbf{e}_{k}\in\mathbb{C}^{K}$ is the canonical basis. By applying the {rKA} to solve this system, as done in Algorithm \ref{algthm:parlrka}, it converges towards the optimal solution $\mathbf{c}^{\star}$ bounded by the following average error:
	\begin{equation}
	\mathbb{E}\{\lVert{\mathbf{c}^{t}-\mathbf{c}^{\star}}\rVert^{2}_{2}\}\leq(1-\kappa_{\mathcal{X}}(\mathbf{B}^{\scriptscriptstyle \mathrm{H}}))^{t}\lVert{\mathbf{c}^{(1)}-\mathbf{c}^{\star}}\rVert^{2}_{2},
	\end{equation}
	\begin{equation}
	\begin{split}
	\text{	with} \qquad		\kappa_{\mathcal{X}}(\mathbf{B}^{\scriptscriptstyle \mathrm{H}})=\dfrac{\lambda_{\min}(\mathbf{B}^{\scriptscriptstyle \mathrm{H}}\mathbf{B})}{\lVert\mathbf{B}\rVert^{2}_{\mathrm{F}}}\overset{(a)}{=}\dfrac{\lambda_{\min}(\hat{\mathbf{G}}^{\scriptscriptstyle \mathrm{H}}\hat{\mathbf{G}})+\xi}{\lVert\hat{\mathbf{G}}\rVert^{2}_{\mathrm{F}}+K\xi},
	\end{split}
	\label{eq:kappa:parl}
	\end{equation}
	where in ($a$), $\mathbf{B}$ is replaced by its definition stated in \eqref{eq:compact-version-of-s}. Note that $\mathbf{c}^{(1)}$ can be seen as the initial solution of the {rKA}, which also provides an average gain based on the non-random KA:
	\begin{equation}
	\kappa^{(1)}_{\mathcal{X}}(\mathbf{B}^{\scriptscriptstyle \mathrm{H}})=\min_{\boldsymbol{\vartheta}\in\mathcal{X},\boldsymbol{\vartheta}\neq0}\dfrac{\lVert\mathbf{b}_{k}^{\scriptscriptstyle \mathrm{H}}\boldsymbol{\vartheta}\rVert^{2}_{2}}{\lVert\mathbf{b}_{k}\rVert^{2}_{2}\lVert\boldsymbol{\vartheta}\rVert^{2}_{2}},
	\end{equation}
	that is very dependent on the row index $k$. The gap $\mathbf{c}^{(1)}-\mathbf{c}^{\star}$ can {thus} be made smaller on average than the provided by the adoption of a zero or any other random solution as the initial one.
\end{theorem}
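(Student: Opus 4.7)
The overall plan is to recognize that this statement is essentially a specialization of Corollary \ref{col:rka-rate-of-convegence} to the per-UE system $\mathbf{B}^{\scriptscriptstyle \mathrm{H}}\mathbf{c}=\mathbf{e}_{k}$, augmented by an analysis of the forced first step (step 9 of Algorithm \ref{algthm:parlrka}). I would organize the proof in three blocks: (i) identify $\mathbf{A}\leftarrow\mathbf{B}^{\scriptscriptstyle \mathrm{H}}$ and $\mathbf{b}\leftarrow\mathbf{e}_{k}$ in Corollary \ref{col:rka-rate-of-convegence}, with the iteration counter reset from the post-forced-step iterate, so the stated bound follows immediately with $\mathbf{c}^{(1)}$ playing the role of the ``initial guess''; (ii) derive the closed form (a) of $\kappa_{\mathcal{X}}(\mathbf{B}^{\scriptscriptstyle \mathrm{H}})$; (iii) characterize $\mathbf{c}^{(1)}$ and justify the claim of reduced average initial error.

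For block (ii), I would substitute the block definition $\mathbf{B}=[\hat{\mathbf{G}};\sqrt{\xi}\mathbf{I}_{K}]$ from \eqref{eq:compact-version-of-s} into \eqref{eq:KA:kappa}. The denominator is immediate: $\lVert\mathbf{B}\rVert^{2}_{\mathrm{F}}=\lVert\hat{\mathbf{G}}\rVert^{2}_{\mathrm{F}}+\lVert\sqrt{\xi}\mathbf{I}_{K}\rVert^{2}_{\mathrm{F}}=\lVert\hat{\mathbf{G}}\rVert^{2}_{\mathrm{F}}+K\xi$. For the numerator, I would write $\lVert\mathbf{B}^{\scriptscriptstyle \mathrm{H}}\boldsymbol{\vartheta}\rVert^{2}_{2}=\boldsymbol{\vartheta}^{\scriptscriptstyle \mathrm{H}}\mathbf{B}\mathbf{B}^{\scriptscriptstyle \mathrm{H}}\boldsymbol{\vartheta}$ and, upon restriction of $\boldsymbol{\vartheta}$ to $\mathcal{X}$ (the range of $\mathbf{B}$, i.e., the row space of $\mathbf{B}^{\scriptscriptstyle \mathrm{H}}$), invoke the standard equality between the smallest nonzero eigenvalue of $\mathbf{B}\mathbf{B}^{\scriptscriptstyle \mathrm{H}}$ and $\lambda_{\min}(\mathbf{B}^{\scriptscriptstyle \mathrm{H}}\mathbf{B})$ (the two Gram matrices share nonzero spectra). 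Finally, because $\mathbf{B}^{\scriptscriptstyle \mathrm{H}}\mathbf{B}=\hat{\mathbf{G}}^{\scriptscriptstyle \mathrm{H}}\hat{\mathbf{G}}+\xi\mathbf{I}_{K}$ and shifting by a scaled identity shifts every eigenvalue by the same amount, $\lambda_{\min}(\mathbf{B}^{\scriptscriptstyle \mathrm{H}}\mathbf{B})=\lambda_{\min}(\hat{\mathbf{G}}^{\scriptscriptstyle \mathrm{H}}\hat{\mathbf{G}})+\xi$, which yields equality (a).

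For block (iii), I would plug $\mathbf{c}^{0}=\mathbf{0}$ and the forced index $r(0)=k$ into the Kaczmarz update \eqref{eq:KA:rKA}; since $\langle\mathbf{b}_{k},\mathbf{0}\rangle=0$ and $[\mathbf{e}_{k}]_{k}=1$, one obtains the closed form $\mathbf{c}^{(1)}=\mathbf{b}_{k}/\lVert\mathbf{b}_{k}\rVert^{2}_{2}$. This is precisely the orthogonal projection of the origin onto the affine hyperplane $\mathcal{H}_{k}:=\{\mathbf{c}:\langle\mathbf{b}_{k},\mathbf{c}\rangle=1\}$, which also contains $\mathbf{c}^{\star}$. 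The announced expression for $\kappa^{(1)}_{\mathcal{X}}(\mathbf{B}^{\scriptscriptstyle \mathrm{H}})$ then reads as the single-row analog of \eqref{eq:KA:kappa}, since only $\mathbf{b}_{k}$ is involved in the deterministic first projection and the Frobenius norm collapses to $\lVert\mathbf{b}_{k}\rVert^{2}_{2}$; its strong dependence on $k$ reflects that the pushed-projection quality varies from center- to edge-UEs.

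The remaining claim --- that on average $\lVert\mathbf{c}^{(1)}-\mathbf{c}^{\star}\rVert^{2}_{2}$ is below $\lVert\mathbf{c}^{0}-\mathbf{c}^{\star}\rVert^{2}_{2}$ --- follows pathwise (hence also in expectation) from the Pythagorean identity applied to the orthogonal projection onto $\mathcal{H}_{k}$: $\lVert\mathbf{c}^{0}-\mathbf{c}^{\star}\rVert^{2}_{2}=\lVert\mathbf{c}^{0}-\mathbf{c}^{(1)}\rVert^{2}_{2}+\lVert\mathbf{c}^{(1)}-\mathbf{c}^{\star}\rVert^{2}_{2}$, with strict inequality whenever $\mathbf{c}^{0}\notin\mathcal{H}_{k}$, which is almost always the case for a zero (or any fixed non-hyperplane) initialization. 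The conceptually trickiest step will be block (ii): one must be careful to restrict $\boldsymbol{\vartheta}$ to $\mathcal{X}$ before invoking the Rayleigh-quotient bound, otherwise the underdetermined nature of the system would force the infimum to zero and the gain to vanish; the remainder of the argument is mechanical substitution and a one-line Pythagorean estimate.
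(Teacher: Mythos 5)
Your proposal is correct and follows essentially the same route as the paper's Appendix~B proof: both reduce to Corollary~\ref{col:rka-rate-of-convegence} with $\mathbf{A}\leftarrow\mathbf{B}\herm$, obtain $\kappa_{\mathcal{X}}(\mathbf{B}\herm)=\lambda_{\min}(\mathbf{B}\herm\mathbf{B})/\frobnorm{\mathbf{B}}^{2}$ by parametrizing $\mathcal{X}$ as the range of $\mathbf{B}$ (your ``equal nonzero Gram spectra'' step is the paper's substitution $\boldsymbol{\vartheta}=\mathbf{B}\boldsymbol{\varepsilon}$), and treat the forced first step as a deterministic rank-1 projection yielding $\kappa^{(1)}_{\mathcal{X}}$. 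Your explicit eigenvalue-shift computation for equality ($a$) and the Pythagorean argument for the reduced initial gap are details the paper leaves implicit, but they do not change the approach.
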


\begin{proof}
	The proof is given in Appendix B.
\end{proof}

\begin{remark}[Average Gain Bounds: Generic Correlated Covariance Matrices with Gaussian Estimated Channel Matrix] \label{rmk:parl1}
	If the elements of $\hat{\mathbf{G}}$ are distributed as $\mathcal{N}_{\mathbb{C}}({0},1)$\footnote{For the canonical channel estimators analyzed in the numerical results, this distribution does not hold, since the estimated powers are certainly smaller than the true ones. Notice as well that the bound is disregarding long-term fading effects.}, the eigenvalues of $\hat{\mathbf{G}}$ are upper limited by $1$ and then, if all eigenvalues are $1$, we have $\lVert\hat{\mathbf{G}}\rVert^{2}_{\mathrm{F}}=K$. Thereby, the average gain stated in \eqref{eq:kappa:parl} is limited as the following manner \cite{Boroujerdi2018b}:
	\begin{equation}
	\dfrac{\lambda_{\min}(\hat{\mathbf{G}}^{\scriptscriptstyle \mathrm{H}}\hat{\mathbf{G}})}{\lVert\hat{\mathbf{G}}\rVert^{2}_{\mathrm{F}}}\leq\dfrac{\lambda_{\min}(\hat{\mathbf{G}}^{\scriptscriptstyle \mathrm{H}}\hat{\mathbf{G}})+\xi}{\lVert\hat{\mathbf{G}}\rVert^{2}_{\mathrm{F}}+K\xi}\leq\dfrac{1}{K},
	\end{equation}
	where $\xi\geq{0}$. From the above inequality, it is possible to presume that the higher the $\xi$, the faster the convergence of Algorithm \ref{algthm:parlrka}, since $\kappa_{\mathcal{X}}(\mathbf{B^{\scriptscriptstyle \mathrm{H}}})$ also turns out to be higher. Consequently, the ZF filter case, i.e., $\xi=0$ is expected to support a slower convergence than the provided by emulating the RZF scheme, {both through rKA}\footnote{The results that will be obtained for the rKA emulating the RZF scheme can be easily expanded to the ZF case, but, since ZF has a worse rate of convergence than RZF, the expected computational gains must be reduced for this last scheme.}. Another important pondering is to consider the case that the elements of $\hat{\mathbf{G}}$ are distributed as $\mathcal{N}_{\mathbb{C}}({0},\psi_{k})$, {where $\psi_{k}\in[0,1]$ is the variance of the estimated channel linking the $k$th UE to the $m$th antenna of the BS {which} incorporates the estimation error. $\psi_{k}$ can thus be interpreted as a fraction or a percentage of the true channel energy, where, to give a sense of values, the normalized MSE (NMSE) of the LS channel estimator is near to $1$ for a SNR of $0$ dB, while of the MMSE is close to $0.2$ following results reported in \cite{Bjornson2017}.} In this case, it is possible to observe that the upper limit is tugged far away from the value of ${1}/{K}$, slowing down too much the convergence of the algorithm as discussed in Corollary \ref{col:rka-rate-of-convegence}. {This fact hinders} the application of the rKA in practical scenarios of interest, {where the BS does not know exactly the CSI of the UEs and needs to appeal to a method of channel estimation. Again, this is one of the reasons why we adopted the LS and MMSE channel estimates when searching a more suitable notion of convergence in Section \ref{sec:nume:subsec:roc:nbrofiterations}.} For the sake of clarity, observe that the diagonal elements of $\hat{\mathbf{G}}^{\scriptscriptstyle \mathrm{H}}\hat{\mathbf{G}}$ are estimates of the $k$th average estimated channel variance and the non-diagonal elements corresponds to the average estimate {of spatial correlation} between different estimated channels. The first varies too much with long-term and spatial correlation effects, while the latter rely mainly on spatial correlation similarities among UEs and estimation errors. 
\end{remark}

\begin{remark}[Average Gain Bounds: Uncorrelated Rayleigh Fading with Gaussian Estimated Channel Matrix] \label{rmk:parl2}
	When the uncorrelated Rayleigh fading channel modeling is considered{, i.e., $\mathbf{R}_{k}=\beta_{k}\mathbf{I}_{M}$, where $\beta_{k}$ is the average long-term fading coefficient (pathloss), disregarding the shadowing term, for UE $k$, and the elements of $\hat{\mathbf{G}}$ are distributed as $\mathcal{N}_{\mathbb{C}}({0},{1})$;} the limits of $\kappa_{\mathcal{X}}(\mathbf{B^{\scriptscriptstyle\mathrm{H}}})$ can be strictly established based on the random matrix theory. This is done by solving the characteristic polynomial of $\hat{\mathbf{G}}^{\scriptscriptstyle \mathrm{H}}\hat{\mathbf{G}}$, finding its minimum, and then applying the {statistical properties} of $\hat{\mathbf{G}}$; {making these steps, we get that \cite{Boroujerdi2018b}}:
	\begin{equation}
	\kappa_{\mathcal{X}}(\mathbf{B^{\scriptscriptstyle \mathrm{H}}})\geq\dfrac{\lambda_{\min}(\hat{\mathbf{G}}^{\scriptscriptstyle \mathrm{H}}\hat{\mathbf{G}})}{\lVert\hat{\mathbf{G}}\rVert^{2}_{\mathrm{F}}}=\dfrac{\left(\sqrt{M}-\sqrt{K}\right)^{2}}{MK} =\dfrac{\left(1-\sqrt{\frac{K}{M}}\right)^2}{K},
	\end{equation}
	where {${K}/{M}$} is the UE-BS antenna ratio, also known as the \textit{loading factor} in M-MIMO. From this result, the number of {rKA} iterations is {definitively} associated with the value of the loading factor. {Interestingly, as the value of ${K}/{M}$ is close to $1$, the average gain tends to be zero. This means that the number of rKA iterations needed to comprise the worst-case convergence condition goes to infinity, i.e., the convergence can be considered unattainable in that case}. One needs to keep in mind, however, that the UE-BS antenna ratio in M-MIMO is typically ${{K}/{M}}\in[0,0.5]$ \cite{Marzetta2016,Bjornson2017}.
\end{remark}

\section{{Computational Complexity Analysis}}\label{sec:complexity}
Since the canonical and proposed {schemes} are adequately described, we now turn our attention to find a fair way to analyze and compare their computational complexities. This analysis aims {to verify} if the rKA-based strategy {actually} paves the way {to computationally light signal processing schemes}. In this section, we will define functions that describe the computational costs of {the schemes based on} matrix-to-matrix multiplications/divisions, as well as the consideration of the same operations for vectors, in terms of complex scalar multiplications/divisions. This computational finding strategy is motivated by the framework presented in \cite[App. B, p. 558]{Bjornson2017}, wherein additions and subtractions are neglected with the premise that the latter are not as heavy as the former {to be handled from the point of view of hardware.} With this in mind, the examination of the computational cost {is divided} into the UL and DL phases, {as presented in the sequel. Before that, we first bring a discussion of two proposed possibilities of how the hardware on the BS side can be assembled depending on possible, practical needs that a network designer might face in applying the proposed scheme. To conclude this section, we derive upper bounds for the number of rKA iterations regarding the computational complexity gap between the canonical and proposed schemes.}

\subsection{{Discussing Possible Settings for BS Hardware}}

{
	The first most simple, natural setup is called as the \textit{flexible setting} (FLS) wherein the BS is comprised of a unique central processing unit that executes sequentially the PARL rKA-based RZF scheme defined in Algorithm \ref{algthm:parlrka}. The problem with this approach is basically the execution time of the algorithm until convergence of all receive combining vectors. Depending on the other activities of BS hardware and the channel quality, the number of rKA iterations $T_{\scriptscriptstyle\mathrm{rKA}}$ is likely to be swiftly limited, as well as the maximum gap between the true value and a rKA solution, as indicated in Corollary \ref{col:rka-rate-of-convegence}. This approach, although placing a performance limit, represents the most flexible solution, since an unknown number of UEs can be served by the BS. As shall be elucidated, this is the most outstanding characteristic of the FLS in relation to the other proposed setup alternative.
}

{
	The other setup is called as the \textit{time saving setting} (TSS) which is based on the observation that each $k$th iteration of the PARL rKA-based RZF scheme is in reality obtaining the factorized version of the receive combining vector associated with UE $k$, where $k = {1,2,\dots,K}$ (see \textit{step 4} to \textit{step 18} of Algorithm \ref{algthm:parlrka}). This allows us to assume that the BS hardware can be assembled with $K$ processing units connected in parallel, where each of them is designated to find the solution, through rKA, of the UD SLE problem in \eqref{eq:ud-sle} redefined for a single UE of interest. In other words, each processing unit therefore has a simple, cheap, and dedicated hardware responsible to obtain, via rKA, the results required to the BS get the receive combining vector related to one of the $K$ UEs being served by it. On one hand, this setup will reduce at most in a factor of $K$ the time required to the BS obtains the knowledge of the receive combining matrix. On the other hand, the cost and coordination complexity of the $K$ processing units can be high and unattractive in comparison with the ones of the FLS. Moreover, the number of UEs that the BS can serve will be limited by the number of processing units installed. This setting is therefore advised to be used when the number of UEs in the BS coverage area is commonly small, and the channel effects place very hard constraints to the value of the coherence time $\tau_{\mathrm{c}}$. 
}

{
	Before proceeding, we want to emphasize that the computational cost functions acquired below are well founded for both FLS and TSS, and the selection of which setting to use depends mainly on the conditions established by the location where the communication services need to be installed.
}
\subsection{{Uplink Computational Cost}}
{
	From Algorithm \ref{algthm:parlrka}, it can be observed that the heavier computations are related to acquire the sampling distribution $\mathbf{p}$ in \textit{step 11}, and the residual $\eta^{t}$ in \textit{step 13}. In particular, the computational cost assigned to $\mathbf{p}$ is considered to be the knowledge of all its elements, that is, the computation of all possibles $P_{r(t)}$ values within a coherence block, in such a way the generation of some $r(t)$ rows comprises of drawing random numbers based on $\mathbf{p}$. This translates to the computation of $KP_{r(t)}$ values that takes a computational complexity of $2MK$ complex multiplications, which corresponds to computing, within a given coherence block, $\lVert\hat{\mathbf{g}}_{r(t)}\rVert^{2}_{2}$ for every UE and $\lVert\hat{\mathbf{G}}\rVert^{2}_{\mathrm{F}}$ only once. Now the computation of $\eta^{t}$ involves computing $\lVert\hat{\mathbf{g}}_{r(t)}\rVert^{2}_{2}$ and $\langle\hat{\mathbf{g}}_{r(t)},\mathbf{u}^{t}\rangle$ for each rKA iteration occurring within a coherence block. As the first calculation is already known by the BS hardware from the computation of $\mathbf{p}$, the computational complexity assigned to the residual computation is $M$ complex multiplications. Observe, however, that the residual term must be computed for every new rKA iteration; because of this, the successive residual computations accumulate a total complexity of $MT_{\scriptscriptstyle\mathrm{KA}}$ complex multiplications throughout the whole process described in Algorithm \ref{algthm:parlrka}.
}

{
	Another key discussion point of the UL computational complexity for the PARL rKA-based RZF scheme is related to the signal recovery problem described in \eqref{eq:signalest}. In view of the fact that each $k$th iteration of the outer loop (\textit{step 4} to \textit{step 18}) is actually computing $\mathbf{d}^{\scriptscriptstyle\mathrm{rKA}}_{k}$ and not $\mathbf{v}^{\scriptscriptstyle\mathrm{rKA}}_{k}$ in Algorithm \ref{algthm:parlrka}, it is appropriate to perform the following computations in a specific order, so as to obtain an estimation of the signals sent by UEs to the BS in a given communication instant with the most reduced complexity:
	\begin{subequations}
		\begin{equation}
		\mathbf{V}^{\scriptscriptstyle\mathrm{rKA}} = \hat{\mathbf{G}}\mathbf{D}^{\scriptscriptstyle\mathrm{rKA}}, \label{eq:cplx:UL-receveing-PARL-1}
		\end{equation}
		\vspace{-5mm}    
		\begin{equation}
		\hat{\mathbf{s}}=(\mathbf{V}^{\scriptscriptstyle \mathrm{rKA}})^{\scriptscriptstyle \mathrm{H}}\mathbf{y}.\label{eq:cplx:UL-receveing-PARL-2}
		\end{equation}	
	\end{subequations}	
	\noindent where \eqref{eq:cplx:UL-receveing-PARL-1} is computed only once and takes $MK^2$ complex multiplications, while the signal recovery in \eqref{eq:cplx:UL-receveing-PARL-2} leads to $MK$ complex multiplications and has to be computed for all $\tau_{\mathrm{ul}}$ instants.}

{
	Table \ref{tab:ULcomplexity} summarizes the overall computational complexity for the PARL rKA-based RZF receiver/combiner scheme. For the sake of comparison, it also shows the total complexity of the canonical ZF and RZF schemes based on the results disposed in \cite{Bjornson2017}. We stress that the reception column entails the computational cost associated to estimating the signals sent by the UEs on the BS side, as expressed in \eqref{eq:signalest}. Some remarks can then be made regarding the given results. The computation of the receive combining matrix for Algorithm \ref{algthm:parlrka} seems to be the most computationally light, but this value is very dependent on the number of rKA iterations $T_{\scriptscriptstyle\mathrm{rKA}}$. Here, therefore, arises the need to define a notion of convergence of Algorithm \ref{algthm:parlrka}, as done in Section \ref{sec:nume:subsec:roc:nbrofiterations} for a particular channel (NLoS dense urban) scenario of interest. The PARL rKA-based RZF scheme shows, however, the greatest complexity for reception due to the calculations described in \eqref{eq:cplx:UL-receveing-PARL-1} and \eqref{eq:cplx:UL-receveing-PARL-2}. The complexity functions obtained herein are used in Section \ref{sec:cplx:subsec:relating-ccs} to derive upper bounds for $T_{\scriptscriptstyle\mathrm{rKA}}$ based on relative computational complexities among the scheme in Algorithm \ref{algthm:parlrka} and those classical ones on which the algorithm can be based.
	\begin{table}[htp]
		\centering
		\caption{{Computational complexity per coherence interval for different receive combining schemes.}}
		\label{tab:ULcomplexity}
		\begin{tabular}{cccc}
			\hline
			\multirow{2}{*}{{\textbf{Scheme}}} & \multicolumn{2}{c}{{\textbf{Receive combining matrix}}} & {\textbf{Reception}} \\ \cline{2-4} 
			& {\textit{Multiplications}} & {\textit{Divisions}} & {\textit{Multiplications}} \\ \hline
			{ZF} & {$\frac{3K^{2}M}{2}+\frac{KM}{2}+\frac{K^3-K}{3}$} &{$K$} & {$\tau_{\mathrm{ul}}MK$} \\ \hline
			{RZF} & {$\frac{3K^{2}M}{2}+\frac{3KM}{2}+\frac{K^3-K}{3}$} & {$K$} & {$\tau_{\mathrm{ul}}MK$} \\ \hline
			\begin{tabular}[c]{@{}c@{}}{PARL rKA-based RZF}\\ {(Algorithm 1)}\end{tabular} & {$MT_{\scriptscriptstyle\mathrm{rKA}}+2MK$} & {$-$} & {$\tau_{\mathrm{ul}}MK+MK^2$} \\ \hline
		\end{tabular}
	\end{table}
}

\subsection{{Downlink Computational Cost}}
The computational complexity for the DL phase is composed of two different components: (a) the transmit precoding matrix computations, which are specified in \eqref{eq:ul-dl-duality}; and (b) the transmission stage that denotes the computational burden related to the computation of $\mathbf{x}$, {the precoded signal defined in Section \ref{sec:sub:down}}. In the special case provided when considering UL-DL duality, there is a favorable equality among the computational cost of all the schemes{\cite{Bjornson2017}, i.e., including both canonical and rKA-based schemes}. For the latter, it is important to remember that the combining matrix is already available at any given symbol period within $\tau_{\mathrm{c}}$, as demonstrated by the output values of Algorithm \ref{algthm:parlrka} (see also \eqref{eq:cplx:UL-receveing-PARL-1}). The computation of the precoding matrices consumes $MK$ complex multiplications, whereas the calculation of the precoded signals takes $\tau_{\mathrm{dl}}MK$ complex multiplications, irrespective of which precoded scheme is selected.

\subsection{{Relating Computational Complexities}}\label{sec:cplx:subsec:relating-ccs}
{
We will focus now on deriving the computational costs viewed as upper bounds of canonical and proposed schemes. Once the computational complexities for DL are the same, as shown above, it is not difficult to see from Table \ref{tab:ULcomplexity} that we can find the number of iterations $T_{\scriptscriptstyle\mathrm{rKA}}$ that makes the computational complexity of the PARL rKA-based RZF balance with that of the ZF or the RZF scheme. By equalizing the total computational complexities of the schemes in the UL, which are the sum of all the columns given in Table \ref{tab:ULcomplexity} for each scheme,from the PARL rKA-based RZF scheme to those of the considered canonical schemes, separately, and after isolating the number of rKA iterations, we get
\begin{subequations}
	\begin{equation}
	\overline{T}^{\scriptscriptstyle \mathrm{ZF}}_{\scriptscriptstyle\mathrm{rKA}} = \dfrac{K^3}{3M} + \dfrac{K^2}{2} + \dfrac{(4K-9KM)}{6M}.\label{eq:cplx:trka-zf}
	\end{equation}	
	\begin{equation}
	\overline{T}^{\scriptscriptstyle \mathrm{RZF}}_{\scriptscriptstyle\mathrm{rKA}} = \dfrac{K^3}{3M} + \dfrac{K^2}{2} + \dfrac{(4K-3KM)}{6M}, \label{eq:cplx:trka-rzf}
	\end{equation}
\end{subequations}
Those quantities tell us how many rKA iterations are necessary for Algorithm \ref{algthm:parlrka} to reach the same computational complexity of the canonical schemes considered when a system setup is defined by the pair $(M,K)$. For example, setting as an input of Algorithm \ref{algthm:parlrka}, a $T_{\scriptscriptstyle\mathrm{rKA}}$ greater than $\overline{T}^{\scriptscriptstyle \mathrm{RZF}}_{\scriptscriptstyle\mathrm{rKA}}$, given any practical values of $M$ and $K$, the computational complexity of the PARL rKA-based RZF scheme would be greater than the one placed to run the canonical RZF scheme. In practice, \eqref{eq:cplx:trka-rzf} and \eqref{eq:cplx:trka-zf} are very appealing comparison metrics that can be used to calibrate the value of $T_{\scriptscriptstyle\mathrm{rKA}}$ in Algorithm \ref{algthm:parlrka} so that it mimics the computational complexities of the canonical schemes considered. Interestingly, these quantities can be interpreted as upper bounds for the number of rKA iterations, in the means that when these values are exceeded, the proposed rKA-based scheme is no longer a computationally viable alternative in comparison with those traditionally available. A straightforward look over \eqref{eq:cplx:trka-rzf} and \eqref{eq:cplx:trka-zf} shows that other relevant upper bounds can easily be derived based on other canonical schemes, as one for the classical maximum-ratio combiner (MRC) scheme, thus making the complexity of Algorithm \ref{algthm:parlrka} very elastic depending on the intent in which the system was implemented. For example, considering a scenario where we always want to achieve a complexity with Algorithm \ref{algthm:parlrka} below to the one provided by the ZF scheme for whatever performance is achieved. To accomplish this we simply set a number of rKA iterations as $\{T_{\scriptscriptstyle\mathrm{rKA}}\in\mathbb{N}^{+}|T_{\scriptscriptstyle\mathrm{rKA}}<\overline{T}^{\scriptscriptstyle \mathrm{ZF}}_{\scriptscriptstyle\mathrm{rKA}}\}$. Note that the bounds are only dependent upon the system scale definer pair $(M,K)$. To draw some general quantitative conclusions on the performance--computational complexity trade-off of the PARL rKA-based RZF and classical RZF, Section \ref{sec:nume:subsec:complexity} develops the notion of convergence based on an average value of $T_{\scriptscriptstyle\mathrm{rKA}}$, which is derived in Section \ref{sec:nume:subsec:roc:nbrofiterations}, with the upper bound in \eqref{eq:cplx:trka-rzf}.}

\section{Numerical Results}\label{sec:results}
Let us now consider a square-network layout to gather some quantitative results on the applicability of the Kaczmarz methodology in a more practical M-MIMO scenario. A single-cell case adopting the parameters exhibited in Table \ref{tab:simulation-parameters} is considered throughout the simulations. The cell covers an area of $0.25\times0.25$ km\textsuperscript{2} with a {center located} BS equipped with $M$ antennas spatially multiplexing $K$ UEs. The UEs are uniformly distributed inside the cell area with a minimum distance of $35$ m to the BS. The average long-term fading denoting the pathloss effect for UE $k$ is evaluated as: $\beta_{k}=\Gamma-10\alpha\log_{10}(d_{k})$ [dB]; where $d_{k}$ is the Euclidean distance between UE $k$ and the BS, $\Gamma$ is the pathloss constant term at a reference distance of $1$ m, and $\alpha$ is the pathloss exponent. Since otherwise affirmed, the ordinary values of $\Gamma=-35.3$ dB and $\alpha=3.76$ are used based on a \textit{NLoS dense urban scenario} (features like cell size and large scale effects are very related to the type of scenario considered) \cite{Marzetta2016,Bjornson2017}. {By joining these values and those presented in Table \ref{tab:simulation-parameters}, the UE close to the center of the cell or center-UE has an average SNR of $17.63$ dB, while the UE located on the edge of the cell or edge-UE possesses $-14.47$ dB; one should note that the average disregards the shadowing term that fulfills some variability of the specified values. In addition to that, three scenarios for the loading factor were assumed: the lowly loaded scenario with ${K}/{M}=0.1$, the moderately loaded with ${K}/{M}=0.3$, and the highly loaded with ${K}/{M}=0.5$.}

\begin{table}[htp]
	\centering
	\caption{Parameters considering a square-network layout for a single-cell M-MIMO scenario {under a dense urban channel condition}.}
	\label{tab:simulation-parameters}
	\begin{tabular}{ll}
		\toprule
		\textbf{Parameter} & \textbf{Value} \\ \hline
		Single-cell area & $0.25\times0.25$ km\textsuperscript{2} \\
		BS antennas ($M$) & $100$ \\
		\# UEs  ($K$) & $10;30;50$ \\
		{Loading factor ($K/M$)} & {$10\%;30\%;50\%$} \\ 
		Bandwidth & $20$ MHz \\ 
		Receiver noise power & $-91$ dBm \\ 
		UL transmit power per UE & $20$ dBm \\ 
		Antenna correlation factor ($r $)& $0.5$\\
		Coherence interval ($\tau_{\mathrm{c}}$) & $200$ symbols \\ 
		Pilot length ($\tau_{\mathrm{p}}$) & $K$ symbols \\ \hline
		Pathloss exponent ($\alpha$)& $2.00; 3.76; 4.00$ \\
		Pathloss constant term ($\Gamma$)& $-35.3$ dB \\
		Shadowing standard deviation ($\sigma$) & $4$ dB  \\ 
		{SNR range} & {$[-14.47,17.63]$ dB} \\ \hline
		Channel estimation method & LS and MMSE \\
		\bottomrule
	\end{tabular}
\end{table}

Canonically, the covariance matrix model for uncorrelated Rayleigh fading is
\begin{equation}
\mathbf{R}_{k}=\beta_{k}10^{f/10}\mathbf{I}_{M},
\label{eq:spatial-corr-model:uncorrelated}
\end{equation}
where $f \sim \mathcal{N}({0},{\sigma^2})$ stands for the {long-term} fading (shadowing) with zero mean and standard deviation $\sigma$, which is constant to all elements of the array of antennas \cite{Marzetta2010,Marzetta2016}. Motivated by the results obtained in \cite{Bjornson2018a}, it is also assumed a spatial correlation model composed of the exponential correlation model for a uniform linear array (ULA) with long-term fading variations over the array \cite{Loyka2001,Gao2015,Gao2015a}. The exponential correlation model stands for spatial antenna correlation, while shadowing fluctuations represent the power variability across antenna elements caused by the disordered distribution of the scatters in the environment. {One should be emphasized that this last effect is empirically justified by the conclusions acquired in \cite{Gao2015,Gao2015a}}. That being said, the $(m,n)$th element of the covariance matrix for the $k$th UE under this combined spatially correlated channel is defined as \cite[p. 11, 1\textsuperscript{st} column]{Bjornson2018a}
\begin{equation}
\left[\mathbf{R}_{k}\right]_{m,n} = \beta_{k} r^{\lvert{n-m}\vert} e^{i(n-m)\theta_{{k}}} 10^{{(f_{m} + f_{n})/20}},
\label{eq:spatial-corr-model:correlated}
\end{equation}
where $m,n\in\{1,\dots,M\}$, $r \in [0,1]$ is the antenna correlation factor, and $\theta_{k}$ is the angle-of-arrival of the $k$th UE {signal}, which is uniformly distributed in $[-\pi,+\pi)$ when considering an horizontal ULA. {Besides,} $f_1,\dots,f_{m},\dots,f_{M} \sim \mathcal{N}({0},{\sigma^2})$ are i.i.d. random variables that {form} the random fluctuations of the long-term fading with zero mean and a standard deviation $\sigma$. We take this moment to define the moderate spatial correlation notion for this model as the scenario in which $r=0.5$ and $\sigma=4$ dB, as considered in \cite{Bjornson2018a}. This moderate spatial correlation setting is often used in the simulations.

As we are considering a single-cell {where UEs are assigned with mutually orthogonal pilots}, the channel estimation process is observed to be only noise limited. Some important observations about the covariance matrices of the channel estimates can then be made to further clarify the evaluations carried out in this section. One of the effects of channel estimation over the eigenvalues of the true channel's covariance matrix is generally the reduction of their magnitudes, once all estimate is comprised of error \cite{Bjornson2017}. This error is seen higher for the LS estimator in comparison to the MMSE estimator, where the difference is due to the use or not of prior statistical information. Another known key result is that the stronger the eigenvalue associated with a given eigendirection, the easier it becomes the channel estimation related to this spatial direction \cite{Bjornson2017}. Note that a higher eigenvalue actually translates in {increasing SNR}, which explains the best channel estimates. Since spatial correlation impacts the eigenvalues of the channel covariance matrix, where few of them even carry a large fraction of the power, as shown in \cite{Bjornson2018a,doi:10.1002/ett.3563} for the considered model, spatially correlated channels are therefore susceptible to a better estimate. For a strong spatial correlation, less eigenvalues have considerable power, further enhancing the channel estimation related to these spatial directions that are the most significant for establishing the link between the BS antennas and a UE. When the structure of the channel is not duly exploited, however, as in the case of the LS estimator, the scaling of the channel estimate is difficult to be correctly obtained, and thus only the channel direction can be estimated satisfactorily \cite{Bjornson2017}. {We stress that these concepts will help to understand the behavior of some curves obtained below.}

\subsection{Impacts Detrimental to Sample Probability}\label{sec:nume:subsec:rowprob}
As can be seen from \textit{step 11} of Algorithm 1, the sample probability quantifies the relative power of a given UE in relation to the system. $P_{r(t)}$ is therefore mainly related to the power of the estimated channels, whereby in our model it is also accounting for the pathloss and shadowing effects, {as can be seen in \eqref{eq:spatial-corr-model:uncorrelated} and \eqref{eq:spatial-corr-model:correlated}}. One can then assume that this value is very sensitive to long/short-term fading, channel estimation procedure, and even impacted by spatial correlation; being some of these effects illustrated further on.

\begin{figure}[htbp]
	\centering
	\includegraphics[width=.75\linewidth]{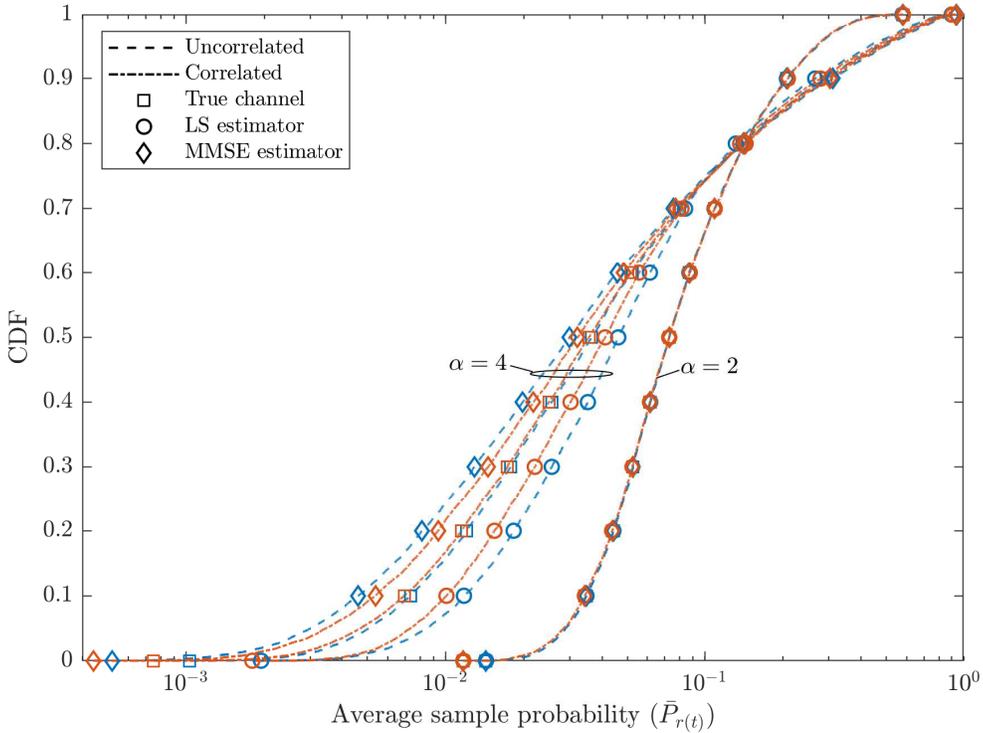}
	\caption{{CDFs of the sample probability $P_{r(t)}$ defined in \textit{step 10} of Algorithm \ref{algthm:parlrka} and averaged over independent small-scale fading realizations with ${K}/{M}=0.1$, $r=0.5$, and $\sigma=4$ dB. The probabilities were also generated for two different values of $\alpha$ and when applying the LS/MMSE channel estimators. The true channel case indicates the perfect knowledge of CSI in the BS.}}
	\label{fig:figure1}
\end{figure}

By treating $P_{r(t)}$ as a random variable, Fig. \ref{fig:figure1} gives the cumulative distribution functions (CDFs) of the average sample probability, as specified in \textit{step 11} of Algorithm 1, {for different values of $\alpha$.} It is clear to see that $P_{r(t)}$ is strongly influenced by pathloss, since the CDF curves are shifted to the left as $\alpha$ grows. This result corroborates the well-known fact that edge-UEs are doomed, naturally, to have a minor level of power than the center-UEs; ergo, the former also own lower values of $P_{r(t)}$ that become more and more distant from those obtained for center-UEs as $\alpha$ increases. Note that the channel {estimator} has a reasonable impact as well. The curves for the MMSE estimator are more to the left for $\alpha = 4$, thus showing that this {method} can obtain smaller values of $P_r(t)$ in comparison to LS. This is a direct consequence of the fact that the MMSE channel estimator works well in estimating the channels of edge-UEs. When using the MMSE, there is a small deviation to the right of the spatially correlated curves in relation to the uncorrelated case that can be explained by the improvement of the channel estimation under a moderate level of spatial correlation. Recall that this positive result is only partly obtained for the LS estimator and hence the negative effect of setting a reasonable estimate for the power scaling factor becomes more evident in this case; {thus justifying the different behavior of the curves in the LS case in comparison to that observed for the MMSE channel estimator}. In summary, one can conclude that edge-UEs may have very low values of $P_r(t)$ when compared to center-UEs in scenarios of practical interest; {proving in this way our hypothesis raised in Section \ref{sec:KA:precoding/decoding}}. This fact hinders the {stochastic selection} of the equations attached to edge-UEs, as can be seen in \textit{step 11} of Algorithm \ref{algthm:parlrka}, making it impossible to perform the rKA’s update schedule correctly{, when step 9 is considered to be nonexistent}. As a result, the genuine rKA without any modification applied to this more practical context has a restricted performance.

\begin{figure}[htp]
	\centering
	\includegraphics[width=.95\linewidth]{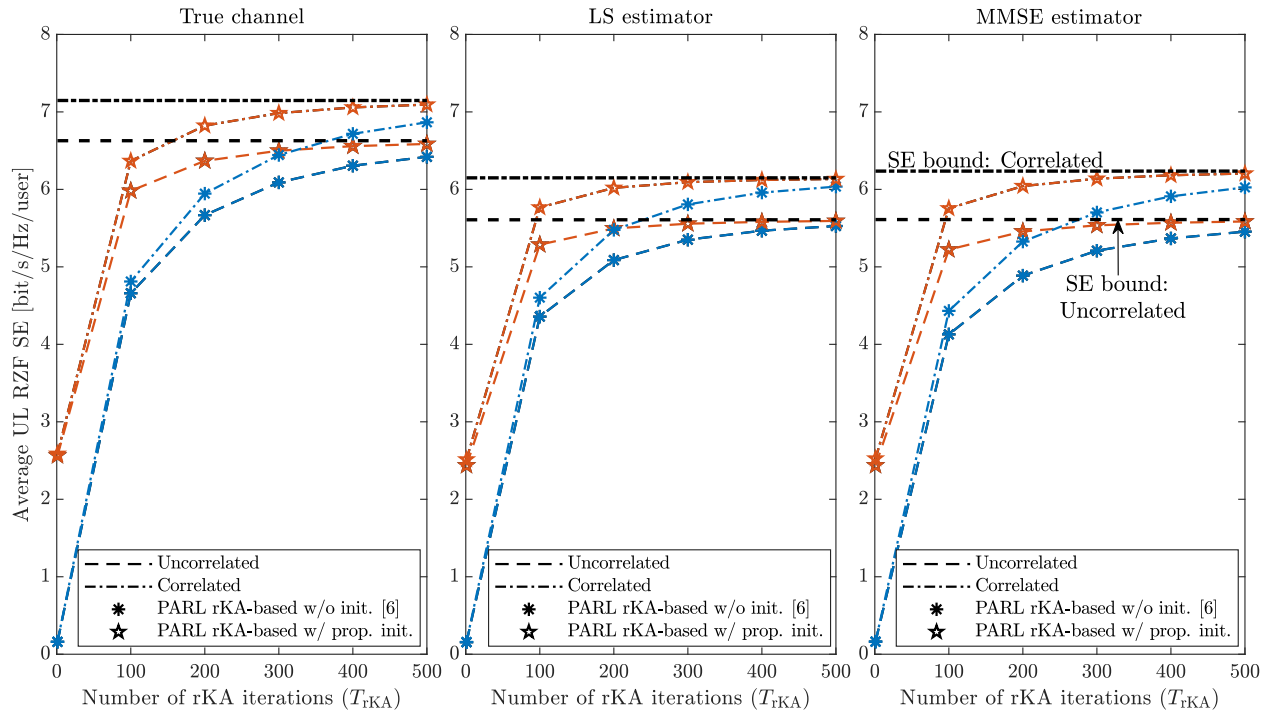}
	\caption{{Average UL SE per UE as a function of $T_{\scriptscriptstyle\mathrm{rKA}}$ for the canonical RZF combining and its emulations performed by the PARL rKA-based schemes. The first PARL algorithm follows the ideas of \cite{Boroujerdi2018b} and does not have an initialization (abbreviated as init.) procedure; the other proposed (abbreviated as prop.) by us in Algorithm \ref{algthm:parlrka} has a hybrid initialization that guarantees the convergence of all receive combining vectors. Three different channel estimates are considered: (a) idealistic case where the true channel is known at the receiver, (b) LS channel estimates, and (c) MMSE channel estimates. The scenario under evaluation consists in ${K}/{M}=0.1$, $r=0.5$, and $\sigma=4$ dB.}}
	\label{fig:figure2}
\end{figure}

Fig. \ref{fig:figure2} exhibits the average UL SE per UE as a function of the number of rKA iterations and different channel estimates for the equivalent PARL rKA-based RZF scheme conceived in \cite{Boroujerdi2018b} {and the proposed/modified by us using the hybrid initialization method described in Algorithm \ref{algthm:parlrka}}. We stress that the former does not deem with the practical effect visualized in Fig. \ref{fig:figure1}, whereas our proposed methodology takes into account these harmful differences observed in $P_r(t)$ for center- and edge-UEs through a fair, forced (hybrid) initialization process. It is clearly visible that the proposed scheme has surpassed its correspondent one \cite{Boroujerdi2018b}, since our proposal always converges faster {to the reference bounds given by the average RZF SE, where it is important to note that the SEs are being computed using \eqref{eq:sekul}}. Besides that, the average UL SEs obtained are in agreement with the results of \cite{Bjornson2018a}, where the authors have adopted the same spatial correlation model but for the DL phase. The superiority in performance of spatially correlated channels stems from the fact that the spatial correlation arising from antenna correlation and unequal contribution of the antennas is, on average, corroborative to the SE bound defined in \eqref{eq:sekul}. {One of the consequences behind it are the effects of spatially correlated channels on the channel estimation, as explained at the beginning of this section.} In addition, it is important to keep in mind that the use-and-forget technique, used to define the {ergodic capacity in} \eqref{eq:sekul}, is very sensitive to the assurance of channel hardening, which, in turn, is dramatically impacted by the spatial correlation phenomenon, as reported in \cite{Bjornson2017,doi:10.1002/ett.3563}.

In conclusion, long-term channel effects have a noticeable impact on the performance of rKA, and their concern in the design of the rKA-based schemes applied to M-MIMO allows better results. We emphasize that the improvements achieved by the proposed algorithm does not imply any increase in computational complexity compared to that based on the original idea \cite{Boroujerdi2018b}, {where one can observe that} our hybrid proposal also appeals because of its broader applicability. {Some other inferences can be drawn from the results obtained here.} As the fact that when the true channel is known at the receiver side, it seems necessary more rKA iterations than the case of MMSE channel estimates to follow the boundaries of SE, which in its turn needs more rKA iterations than the worst channel estimation case constituted by LS channel estimates.

\subsection{Rate of Convergence: Number of Iterations Required}\label{sec:nume:subsec:roc:nbrofiterations}
{Our concern now is to define a notion of convergence based on the number of rKA iterations required to reach a predefined error limit by considering the canonical, average RZF SE per UE as the true value. This will give us more insights into how the algorithm converges and will also help to establish a way to evaluate in the sequel its computational complexity. The gap in percentage between the average SEs per UE of the canonical RZF scheme and of its emulation performed through Algorithm \ref{algthm:parlrka} is shown in Fig. \ref{fig:figure3}, as a function of the average number of rKA iterations for {uncorrelated and moderately correlated channels}, different channel estimates, and different loading factor values. Starting with the last, the figure shows that a better convergence is obtained for smaller values of ${K}/{M}$, as expected from earliest ideas conceived from Remark \ref{rmk:parl2}. Increasing ${K}/{M}$ expands the solution subspace of the SLE which in its turn increases the interference among UEs' solutions. Fig. \ref{fig:figure3} embraces the perception that the better the CSI at the receiver, the worse the algorithm convergence. The reason why this occurs comes from Remark \ref{rmk:parl1} in which it is possible to observe that as better is the channel estimation, the better is also the characterization of power for the estimated channels; {consequently, spawning disturbances on the division of a UE power and the power of the system}. {Estimating the channels with a more naive approach, therefore, improves the convergence of Algorithm \ref{algthm:parlrka}. One should bear in mind, however, that the achievable average SE is also diminished.} We also note that better convergence results are obtained for the uncorrelated case in comparison to moderately, spatially correlated channels. To explain this, we must again return to Remark \ref{rmk:parl1}, where we observe that the average gain is severely impacted by $\lambda_{\min}(\hat{\mathbf{G}}^{\scriptscriptstyle \mathrm{H}}\hat{\mathbf{G}})$, which is sharply disturbed by the spatial correlation. The penalty brought by the considered spatial correlation model is best detailed in the next section. 
	
\begin{figure}[htp]
	\centering
	\includegraphics[width=.75\linewidth]{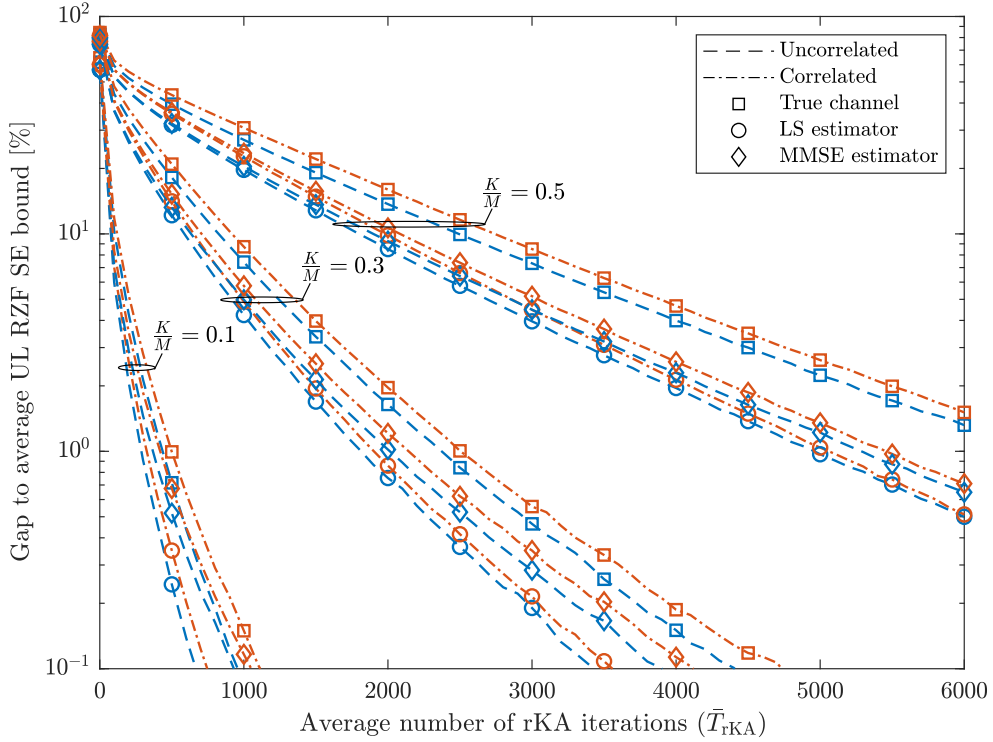}
	\caption{Percentage gap between the average UL SEs per UE of the canonical RZF scheme and its correspondent emulation performed through the proposed PARL rKA-based scheme (Algorithm \ref{algthm:parlrka}) as a function of the average number of rKA iterations with $M=100$, $r=0.5$, and $\sigma=4$ dB.}
	\label{fig:figure3}
\end{figure}

Table \ref{tab:average-number-of-iterations} gives the average number of rKA iterations $\bar{T}_{\scriptscriptstyle\mathrm{rKA}}$ to reach two definitions of convergence chosen to obtain average errors from the canonical performance, which are: (a) less than $10\%$ and (b) less than $1\%$. {For example, if the average UL SE per UE of the RZF was $2$ bit/s/Hz when applied to a scenario where ${K}/{M}=0.1$ and the channels were uncorrelated, a BS running the proposed rKA-based scheme would run on average $93$ times the inner loop (\textit{step 7} to \textit{step 16}) of Algorithm \ref{algthm:parlrka} to obtain a UL SE per UE of $1.8$ bit/s/Hz, whereas $293$ iterations are needed on average to get a UL SE per UE of $1.98$ bit/s/Hz. Recalling that the BS is equipped with simple, cheap hardware, we observe that the most realistic implementation scenario is when the LS\footnote{The MMSE estimator has been used so far, only to provide insights as the ideal case of channel estimation. It goes, however, against the placed underlying paradigm for the BS that is unable to estimate the second moment of channel responses.} channel estimator is employed on the BS side due to its computational simplicity \cite{Bjornson2018a,Gao2015}.} {This explains why we only consider quantities related to the LS channel estimator in Table \ref{tab:average-number-of-iterations}}. {Moreover,} these values were obtained through a linear interpolation between the points of the curves in Fig. \ref{fig:figure3}, wherein an interval of $100$ average number of rKA iterations was adopted between neighboring points.

\begin{table}[htp]
	\centering
	\caption{Average number of rKA iterations $\bar{T}_{\scriptscriptstyle\mathrm{rKA}}$ required to the proposed PARL rKA-based RZF scheme can achieve a defined error tolerance of the attainable SE for LS channel estimation.}
	\label{tab:average-number-of-iterations}
	\begin{tabular}{ccccc}
		\toprule
		\multirow{3}{*}{$\frac{K}{M}$} & \multicolumn{4}{c}{$\bar{T}_{\scriptscriptstyle\mathrm{rKA}}$} \\ \cline{2-5} 
		& \multicolumn{2}{c}{\textbf{Error bound of 10\%}} & \multicolumn{2}{c}{\textbf{Error bound of 1\%}} \\ \cline{2-5} 
		& \textit{Uncorr./Corr.} & \textit{Ratio\tnote{$\dagger$} {[}\%{]}} & \textit{Uncorr./Corr.} & \textit{Ratio {[}\%{]}} \\ \hline
		$0.1$ & $93/95$ & $-2.15$ & $293/333$ & $-13.65$ \\ \hline
		$0.3$ & $589/655$ & $-11.21$ & $1815/1903$ & $-4.85$ \\ \hline
		$0.5$ & $1799/1983$ & $-10.23$ & $4960/5062$ & $-2.06$ \\ \bottomrule
	\end{tabular}%
	\begin{tablenotes}
		\item[$\dagger$] Ratio: difference in percentage of the correlated case in relation to the uncorrelated case.
	\end{tablenotes}
\end{table}	

\subsection{Rate of Convergence: Spatial Correlation Effects}
{We now set out to better understand how the considered spatial correlation model impacts the algorithm convergence. For this purpose, results of Fig. \ref{fig:figure3} were then extended\footnote{{For the sake of clarity, we have omitted the true channel case; but the obtained responses are still valid under this context.}}, as can be seen in Fig. \ref{fig:figure4}, by considering both $r$ and $\sigma$ dimensions for $\frac{K}{M}=0.1$. When $r$ is varying and $\sigma=0$ dB, {the general behavior tells that the antenna spatial correlation is detrimental to the performance of the PARL rKA-based RZF scheme just in highly antenna correlated channels ($r > 0.7$)}. The same result is completely obtained for all the region of the opposite case wherein only the spatial correlation arising from the environment disorder is considered; confirming thus the findings obtained previously. Fig. \ref{fig:figure3} also corroborates the expansion of the idea that better channel estimates deteriorate the convergence speed of Algorithm \ref{algthm:parlrka}. Again, the reason for expecting performance variations from spatial correlation effects proceeds, strictly speaking, from the relevant changes in the eigenstructure of $\hat{\mathbf{G}}^{\scriptscriptstyle \mathrm{H}}\hat{\mathbf{G}}$ inferred by the degree of spatial correlation that depends on the values of $r$ and $\sigma$ (see the bounds in Remark \ref{rmk:parl1}).}
\begin{figure}[htp]	
	\centering
	\subfloat[{Percentage gap as a function of $\bar{T}_{\scriptscriptstyle\mathrm{rKA}}$ and $r$ with $\sigma=0$ dB.}]{%
		\includegraphics[width=.49\textwidth]{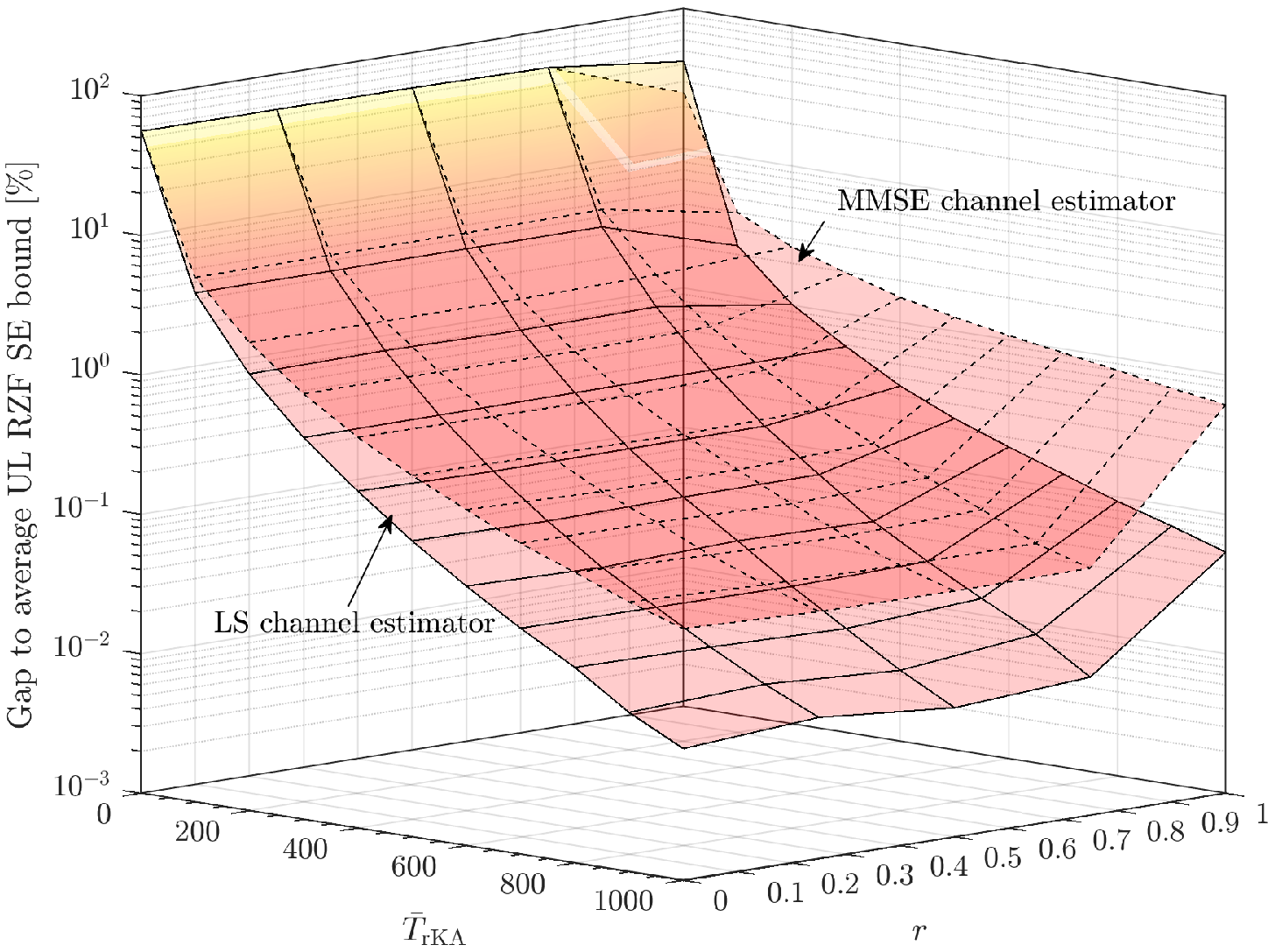}%
		\label{fig:figure4a}
	}
	\subfloat[{Percentage gap as a function of $\bar{T}_{\scriptscriptstyle\mathrm{rKA}}$ and $\sigma$ with $r=0$.}]{%
		\includegraphics[width=.49\textwidth]{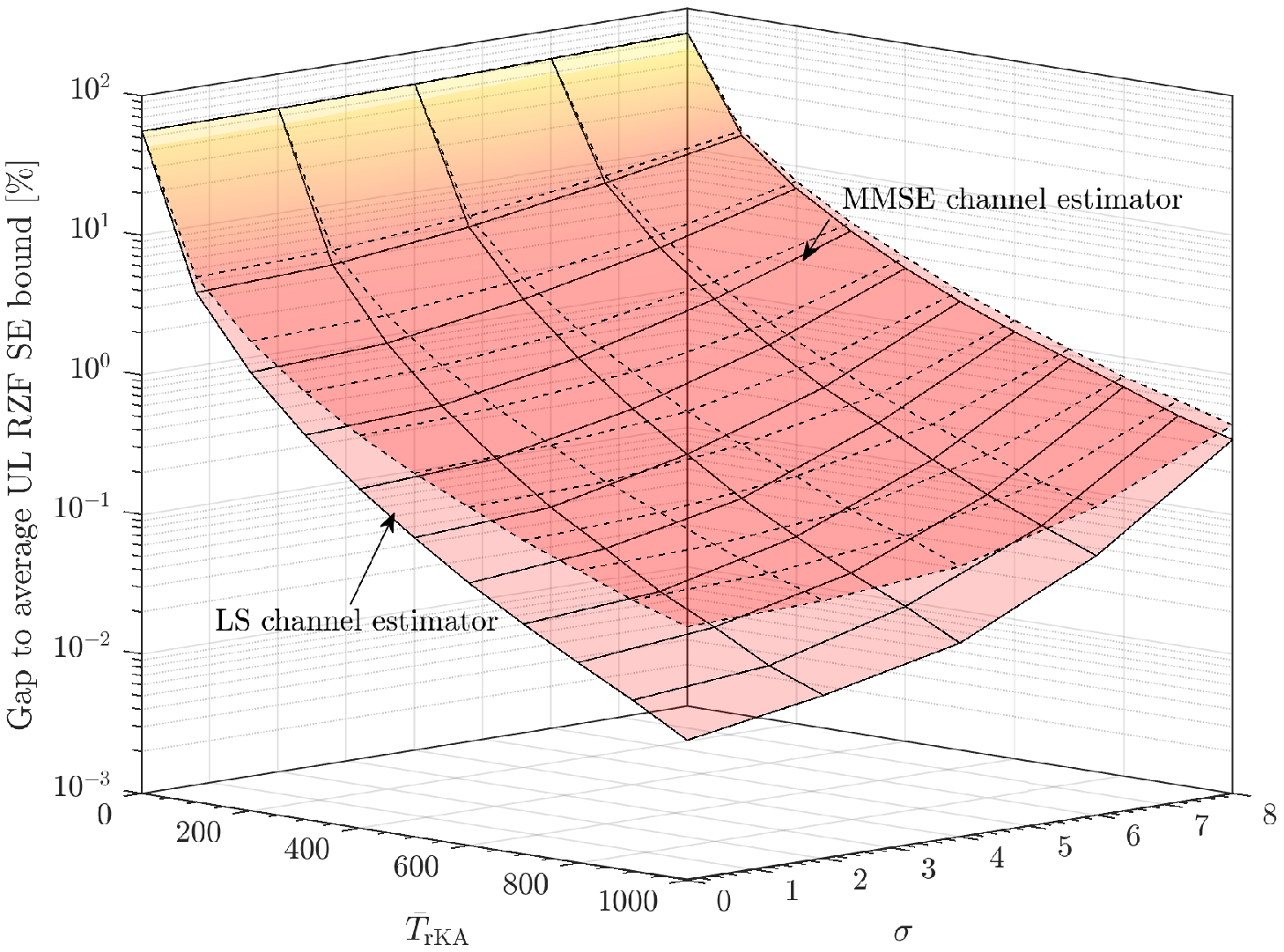}%
		\label{fig:figure4b}
	}
	
	\caption{{Percentage gap between the average UL SEs of the canonical RZF scheme and its correspondent emulation performed through the proposed PARL rKA-based scheme (Algorithm \ref{algthm:parlrka}) as a function of the average number of rKA iterations ($\bar{T}_{\scriptscriptstyle\mathrm{rKA}}$) with ${K}/{M}=0.1$. The graph also considers the variation of the gap in: (a) the antenna correlation $r$ dimension and (b) the shadowing standard deviation $\sigma$ dimension}.}
	\label{fig:figure4}
\end{figure}

\subsection{{Performance--Complexity Trade-Off}}\label{sec:nume:subsec:complexity}
{
	Given the average number of rKA iterations discussed in Section \ref{sec:nume:subsec:roc:nbrofiterations} for the aforementioned scenario of interest (typical NLoS dense urban scenario), which embodies only LS channel estimates, we are now in position to analyze the trade-off considering both the performance and computational complexity between the use of the PARL rKA-based RZF and the canonical RZF schemes to obtain the receive combining matrix in an M-MIMO communication setup. In order to get a more extensive analysis, we compare the upper bound defined in \eqref{eq:cplx:trka-rzf} to the average number of rKA iterations $\bar{T}_{\scriptscriptstyle \mathrm{rKA}}$ needed to attain an average SE per UE, our considered performance metric of 10\% or 1\% less than that given by the canonical RZF scheme, which we denote as $\mathrm{SE}^{\scriptscriptstyle\mathrm{rKA}}_{\scriptscriptstyle 10\%}$ and $\mathrm{SE}^{\scriptscriptstyle\mathrm{rKA}}_{\scriptscriptstyle 1\%}$, respectively. Fig. \ref{fig:figure5} illustrates the behavior of $\overline{T}^{\scriptscriptstyle\mathrm{RZF}}_{\scriptscriptstyle \mathrm{rKA}}$ when the number of BS antennas is varying. At the $m$th point in the figure, $K$ was chosen in such a way that the loading factor equals one of the values in $\{0.1,0.3,0.5\}$, yielding three curves showing the scaling of $\overline{T}^{\scriptscriptstyle \mathrm{RZF}}_{\scriptscriptstyle \mathrm{rKA}}$. To include the performance results, the values in Table \ref{tab:average-number-of-iterations} were approximated downwards to an integer value of $M$ that results in a value less or equal than the ones disposed in the table. This process results in the points marked with circles for uncorrelated channels, and asterisk for moderately, spatially correlated channels with $r=0.5$ and $\sigma = 4$ dB. Grouping this points with respect to performance and linking them, we get trade-off lines that are proportional to the true trade-off between the performance of the PARL rKA-based algorithm, denoted by $\mathrm{SE}^{\scriptscriptstyle\mathrm{rKA}}_{\scriptscriptstyle 10\%}$ and $\mathrm{SE}^{\scriptscriptstyle\mathrm{rKA}}_{\scriptscriptstyle 1\%}$, and the computational complexity, represented by $\mathrm{CC}^{\scriptscriptstyle \mathrm{rKA}}_{\scriptscriptstyle\mathrm{RZF}}$, of the proposed scheme in relation to the canonical RZF. The latter, of course, taking into account the complexity imposed by different system setups delimited by $(M,K)$ pairs. 
}

{
	In Fig. 5 the arrows indicate that above the trade-off lines, the PARL rKA way of implementing the RZF is more computationally light than its classical implementation way. When using the trade-off lines, therefore, one is accepting the loss of performance obtained using the proposed scheme. In view of this observation, we can hence read this graph as: for $M>138$ and an expected performance loss of 10\%, Algorithm \ref{algthm:parlrka} is placed as the most attractive computationally on average, if $T_{\scriptscriptstyle \mathrm{rKA}}$ is selected accordingly to Table \ref{tab:average-number-of-iterations} and whether the channel is correlated or not; this also holds for an expected performance loss of 1\% when $M>254$.
}

{
	From Fig. \ref{fig:figure5}, it is also possible to see that the fraction between any $\overline{T}^{\scriptscriptstyle \mathrm{RZF}}_{\scriptscriptstyle \mathrm{rKA}}$ and $\bar{T}_{\scriptscriptstyle \mathrm{rKA}}$ gives us the proportion of how many times the proposed algorithm is less or more complex taking as reference the complexity of the emulated (RZF) scheme. Assuming an underlying scenario, for example, where we have a BS equipped with a ULA of $M=200$ antennas and is serving $K=100$ UEs, these linked by channels being considered here as correlated, i.e., $K/M=0.5$, if the BS is running the LS channel estimator and the PARL rKA-based scheme, and the services delivered by it accept a performance loss of 10\% (operating point marked by the black diamond in the figure), the BS hardware would be saving $6617/1953\approx3.39$ times less processing than the necessary to run the canonical RZF.
}

{
	In summary, for the Kaczmarz methodology be a reasonable answer to low complexity signal processing schemes, the application scenario in which the M-MIMO communication system will be installed must be widely known. To put it differently, since they can clearly change the values acquired in Table \ref{tab:average-number-of-iterations}, parameters like those involved with the pathloss ($\alpha$, $\Gamma$), system characteristics (cell area, $\mathrm{SNR}$), and other physical and stochastic-modeled phenomena (spatial correlation, $\sigma$) must be previously measured/estimated by the network designer. In this case, one can investigate if the average number of rKA iterations necessary for suitable convergence, given a system scale pair $(M,K)$ in which the system will commonly operates, is really achievable to fulfill our primary goal of obtaining a reduced processing cost BS. This is placed due to the fact that the computational complexity saving of the proposed scheme depends tremendously on $\bar{T}_{\scriptscriptstyle\mathrm{rKA}}$, and can be obtained, as seen from Fig. \ref{fig:figure5}, at the cost of deploying more BS antennas, which is expensive in economic terms and from the point of view of the complexity of the system, or reducing the performance (we consider losses of 1\% and 10\% as good evaluation margins) of the proposed algorithm, which, if placed too far from the performance bound, can considerably lessen the relevance and functionality of the communication system.
} 
\begin{figure}[htp]
	\centerline{\includegraphics[width=.75\linewidth]{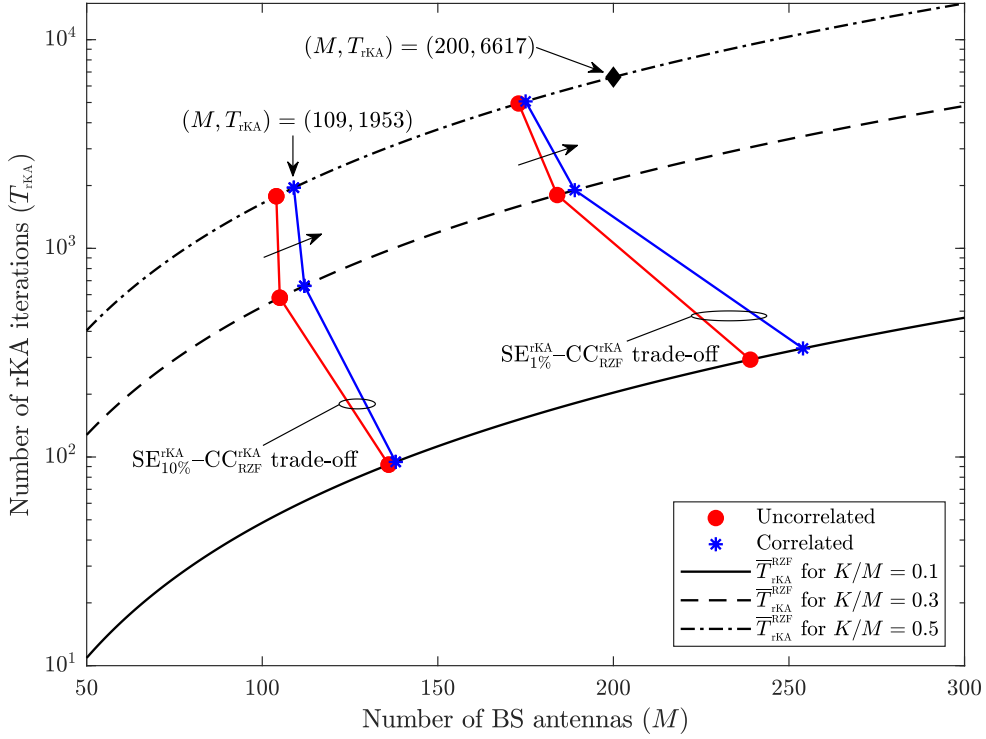}}
	\caption{{Number of rKA iterations $T_{\scriptscriptstyle\mathrm{rKA}}$ in relation to the number of BS antennas $M$, where the curves and values follow the upper bound for $T_{\scriptscriptstyle\mathrm{rKA}}$ defined in \eqref{eq:cplx:trka-rzf} and the quantities discussed in Table \ref{tab:average-number-of-iterations}. The M-MIMO communication scenario evaluated here considers uncorrelated and correlated Rayleigh fading, LS channel estimator, and loading factors $K/M\in\{0.1,0.3,0.5\}$. As the value of $M$ is defined by the abscissa axis, for each $m$th number of BS antennas point, $K$ was computed in such way that $K/M$ satisfies the loading factor values considered in each curve. Demanding better performance of Algorithm \eqref{algthm:parlrka} causes the trade-off frontiers to go to the right, increasing the number of antennas $M$ required for the proposed algorithm becomes appealing.}} 
	\label{fig:figure5}
\end{figure}

\section{Conclusions}\label{sec:conc}

In this paper, we revisited the solution to the problem of computational relaxation for canonical M-MIMO signal processing schemes based on the Kaczmarz methodology for solving a set of linear equations. {In particular, we generalized the application of rKA to obtain the combining/precoding matrices in scenarios where the UEs are allowed to have higher mobility. We have shown that practical long-term effects of wireless channels degrade the performance of the rKA algorithm. To counter this harmful effect, a hybrid, forced initialization approach was proposed and proved theoretically and numerically to surpass, regarding performance and robustness, the raw strategy available in the literature. 
	We also studied the computational cost of the proposed scheme in relation to the classical ones, where upper bounds for the number of iterations were given targeting to set a computationally efficient strategy. Numerical results based on a NLoS dense urban scenario could then be used to drawn important insights about the performance--complexity trade-off of the proposed rKA-based receive combining algorithm. On one hand, we found that the rKA can achieve the complexity gains needed to enable cheaper M-MIMO BSs with good performance results (90\% and 99\% of those obtained with canonical schemes), thus fulfilling the objectives issued even with the limitations imposed by the system and channel scenarios considered. On the other hand, these favorable results are not always viable in practice, since they rely heavily on the characteristics of the scenario considered, including furthermore the physical conditions of the environment, and, sometimes, one needs to pay large penalties on performance (trying to lower $M$) for complexity competitive reasons.} Still, the theoretical characterization of the convergence function with respect to the different system parameters or metrics would be interesting to better understand the applicability of the Kaczmarz algorithm for M-MIMO. This is indeed left as a potential avenue for future research.

\appendix

\section{Appendix A: Useful Results}
The following definitions are considering an SLE in its canonical form $\mathbf{A}\mathbf{x}=\mathbf{b}$, where $\mathbf{A}\in\mathbb{C}^{m\times{n}}$ is the matrix of the constant coefficients, $\mathbf{x}\in\mathbb{C}^{n}$ is the vector of unknowns, and $\mathbf{b}\in\mathbb{C}^{m}$ is the vector of known offset coefficients. The subspace generated by the columns of $\mathbf{A}$ is denoted as $\mathcal{X}\subset \mathbb{C}^{n}$. This SLE is solved via rKA in which the random row selection of $\mathbf{A}$ is denoted as a random variable $R\in\{1,2,\dots,m\}$ and a specific random row picked at iteration $t$ is $z\in\{1,2,\dots,m\}$. It is assumed that $z$ has a generic sample probability given as $P_{z}\in[0,1]$, whereby it is possible to define a probability vector as $\mathbf{p} = (P_{1},\dots,P_{m})\in\mathbb{R}^{m}_{+}$. A remarkable property is that $\sum_{z=1}^{m}P_{z}=1$.

\theoremstyle{definition}
\begin{definition}[Random Rank-1 Projection Operator]
	The random rank-1 projection operator of a random chosen row $R$ is \cite{Boroujerdi2018b}:  $\mathcal{P}_{R}:={(\mathbf{a}_{R}\mathbf{a}^{\scriptscriptstyle \mathrm{H}}_{R})}/{\lVert\mathbf{a}_{R}\rVert^{2}_{2}}$ , where $\mathcal{P}_{R}$ is an $n\times{n}$ matrix.
	\label{apx:def:projoperator}
\end{definition}
\begin{definition}[Average Random Rank-1 Projection Operator]
	From Definition \ref{apx:def:projoperator}, it is possible to specify the average projection operator as \cite{Boroujerdi2018b}: 	$\bar{\mathcal{P}}:=\mathbb{E}\{\mathcal{P}_{R}\}=\sum_{z=1}^{m}P_{z}({\mathbf{a}_{z}\mathbf{a}^{\scriptscriptstyle \mathrm{H}}_{z}})/{\lVert\mathbf{a}_{z}\rVert^{2}_{2}}$, where the expectation is w.r.t. $R$. Thus, $\bar{\mathcal{P}}$ is an $n\times{n}$ positive-semidefinite matrix with $\mathrm{tr}({\bar{\mathcal{P}}})=\sum_{z=1}^{m}P_{z}=1$.
	\label{apx:def:avgprojoperator}
\end{definition}
\begin{definition}[Average Gain]
	The average gain is a suitable metric that measures the exploitation provided by the average projection operator over $\mathcal{X}$ seeking to obtain $\mathbf{x}$, given as \cite{Boroujerdi2018b}:
	$\kappa_{\mathcal{X}}(\mathbf{A},\mathbf{p}):=\min_{\boldsymbol{\vartheta}\in\mathcal{X},\boldsymbol{\vartheta}\neq0}({\boldsymbol{\vartheta}^{\scriptscriptstyle \mathrm{H}}\bar{\mathcal{P}}\boldsymbol{\vartheta}}) / {\lVert\boldsymbol{\vartheta}\rVert^{2}_{2}}$, where one observes that $\kappa_{\mathcal{X}}$ relies on $\mathbf{A}$ and $\mathbf{p}$. Besides, it is possible to infer that $\kappa_{\mathcal{X}}(\mathbf{A},\mathbf{p})\in[0,\frac{1}{\min\{m,n\}}]$. This last result is a consequence of
	$\mathrm{tr}(\bar{\mathcal{P}})=1$, and of the subspace $\mathcal{X}$ has its dimension generally determined by $\min\{m,n\}$.
	\label{apx:def:avggain}
\end{definition}

\section{Appendix B: Proof of Theorem 1}\label{appxB}
Using Definition \ref{apx:def:avgprojoperator} and \eqref{eq:KA:rowpdf}, the average projection operator of $\mathbf{B^{\scriptscriptstyle \mathrm{H}}}$ is \cite{Boroujerdi2018b}
\begin{equation}
\bar{\mathcal{P}}=\sum_{z=1}^{m}\dfrac{\lVert\mathbf{b}^{\scriptscriptstyle \mathrm{H}}_{r(t)}\rVert^{2}_{2}}{\lVert\mathbf{B}\rVert^{2}_{\mathrm{F}}}\dfrac{\mathbf{b}^{\scriptscriptstyle \mathrm{H}}_{r(t)}\mathbf{b}_{r(t)}}{\lVert\mathbf{b}_{r(t)}\rVert^{2}_{2}}=\dfrac{\mathbf{B}\mathbf{B}^{\scriptscriptstyle \mathrm{H}}}{\lVert\mathbf{B}\rVert^{2}_{\mathrm{F}}}.
\label{apx:proof:th1:avgprojoperator}
\end{equation}
From Definition \ref{apx:def:avggain} and \eqref{apx:proof:th1:avgprojoperator}, the average gain provided by $\mathbf{B}^{\scriptscriptstyle \mathrm{H}}$ can be obtained as \cite{Boroujerdi2018b}
\begin{equation}
{\kappa_{\mathcal{X}}(\mathbf{B}^{\scriptscriptstyle \mathrm{H}})=\min_{\boldsymbol{\vartheta}\in\mathcal{X},\boldsymbol{\vartheta}\neq{0}}\dfrac{\lVert\mathbf{B}^{\scriptscriptstyle \mathrm{H}}\boldsymbol{\vartheta}\rVert^{2}_{2}}{\lVert\mathbf{B}^{\scriptscriptstyle \mathrm{H}}\rVert^{2}_{\mathrm{F}}\lVert\boldsymbol{\vartheta}\rVert^{2}_{2}}\overset{(a)}{=}\min_{\boldsymbol{\varepsilon}\in\mathbb{C}^{K},\boldsymbol{\varepsilon}\neq 0}\dfrac{\lVert\mathbf{B}^{\scriptscriptstyle \mathrm{H}}\mathbf{B}\boldsymbol{\varepsilon}\rVert^{2}_{2}}{\lVert\mathbf{B}^{\scriptscriptstyle \mathrm{H}}\rVert^{2}_{\mathrm{F}}\lVert\mathbf{B}\boldsymbol{\varepsilon}\rVert^{2}_{2}}
	=\dfrac{\lambda_{\min}\left(\mathbf{B}^{\scriptscriptstyle \mathrm{H}}\mathbf{B}\right)}{\lVert\mathbf{B}\rVert^{2}_{\mathrm{F}}},}
\end{equation}
wherein ($a$) it was used the fact that any vector in $\mathcal{X}\subset\mathbb{C}^{M+K}$, which is the subspace generated by the columns of $\mathbf{B}^{\scriptscriptstyle \mathrm{H}}$, can be written as $\mathbf{B}\boldsymbol{\varepsilon}$ with $\boldsymbol{\varepsilon}\in\mathbb{C}^{K}$, as made in \cite{Boroujerdi2018b}.
Finally, the average error of the first, deterministic iteration of Algorithm \ref{algthm:parlrka} can be acquired through Definition \ref{apx:def:projoperator}: $\mathcal{P}_{k}={(\mathbf{b}^{\scriptscriptstyle \mathrm{H}}_{k}\mathbf{b}_{k})} / {\lVert\mathbf{b}_{k}\rVert^{2}_{2}}.$ As $k$ is deterministic, $\bar{\mathcal{P}}=\mathcal{P}_{k}$ and then, completing the proof, the average gain provided by the first iteration is
\begin{equation}
\kappa^{1}_{\mathcal{X}}(\mathbf{B}^{\scriptscriptstyle \mathrm{H}})=\min_{\boldsymbol{\vartheta}\in\mathcal{X},\boldsymbol{\vartheta}\neq0}\dfrac{\boldsymbol{\vartheta}^{\scriptscriptstyle \mathrm{H}}\bar{\mathcal{P}}\boldsymbol{\vartheta}}{\lVert\boldsymbol{\vartheta}\rVert^{2}_{2}}=\min_{\boldsymbol{\vartheta}\in\mathcal{X},\boldsymbol{\vartheta}\neq0}\dfrac{\lVert\mathbf{b}_{k}^{\scriptscriptstyle \mathrm{H}}\boldsymbol{\vartheta}\rVert^{2}_{2}}{\lVert\mathbf{b}_{k}\rVert^{2}_{2}\lVert\boldsymbol{\vartheta}\rVert^{2}_{2}}.
\end{equation}



\end{document}